\pgfplotsset{compat=1.14}
\newcolumntype{R}{>{\raggedleft\arraybackslash}X}
\newcolumntype{L}{>{\raggedright\arraybackslash}X}
\newcolumntype{C}{>{\centering\arraybackslash}X}
\newcolumntype{A}{>{\columncolor{gray!25}}C}
\newcolumntype{a}{>{\columncolor{gray!25}}c}
\newlength{\tablen}
\newcolumntype{.}{D{.}{.}{-1}}
\renewcommand\p@subfigure{\arabic{figure}.}
\renewcommand\p@subtable{\arabic{table}.}
\setlist[itemize]{leftmargin=2.5\parindent}
\setlist[enumerate]{leftmargin=2.5\parindent}
\theoremstyle{plain}
\newtheorem{corollary}{Corollary}[section]
\newtheorem{proposition}{Proposition}[section]
\newtheorem{theorem}{Theorem}[section]
\theoremstyle{definition}
\newtheorem{definition}{Definition}[section]
\newtheorem{example}{Example}[section]
\theoremstyle{remark}
\newtheorem{remark}{Remark}[section]
\def\keywords{\vspace{.5em} 
{\noindent \textit{Keywords}: }}
\def\JEL{\vspace{.5em} 
{\noindent \textbf{\emph{JEL} classification number}: }}
\def\AMS{\vspace{.5em} 
{\noindent \textbf{\emph{MSC} class}: }}
\author{\href{https://sites.google.com/view/laszlocsato}{L\'aszl\'o Csat\'o}\thanks{~E-mail: laszlo.csato@sztaki.hu} }
\affil{Institute for Computer Science and Control (SZTAKI) \\
Laboratory on Engineering and Management Intelligence, Research Group of Operations Research and Decision Systems}
\affil{Corvinus University of Budapest (BCE) \\
Department of Operations Research and Actuarial Sciences}
\affil{Budapest, Hungary}
\title{The incentive (in)compatibility of \\ group-based qualification systems}
\date{\today}
\def\Dedication{ 
{\noindent $\mathfrak{Ein}$ $\mathfrak{Fehler}$ $\mathfrak{in}$ $\mathfrak{der}$ $\mathfrak{urspr \ddot{u} nglichen}$ $\mathfrak{Versammlung}$ $\mathfrak{der}$ $\mathfrak{Heere}$ $\mathfrak{ist}$ \\
$\mathfrak{im}$ $\mathfrak{ganzen}$ $\mathfrak{Verlauf}$ $\mathfrak{des}$ $\mathfrak{Feldzuges}$ $\mathfrak{kaum}$ $\mathfrak{wieder}$ $\mathfrak{gut}$ $\mathfrak{zu}$ $\mathfrak{machen}$.\footnote{~``\emph{A single error in the original assembly of the armies can hardly ever be rectified during the entire course of the campaign.}'' (Source: Holger H. Herwig: \emph{The Forgotten Campaign: Alsace-Lorraine August 1914}, p.~195, in Michael S. Neiberg (ed.): \emph{Arms and the Man: Military History Essays in Honor of Dennis Showalter}, Brill Academic Publishers, Leiden, Boston, 2011.)}
}

\vspace{0.25cm}
\noindent (Helmuth Karl Bernhard von Moltke: \emph{Taktisch-strategische \\ Aufs\"atze aus den Jahren 1857 bis 1871})
\vspace{1cm} }
\begin{document}
\newgeometry{top=20mm,bottom=25mm,left=25mm,right=25mm}

\maketitle
\thispagestyle{empty}
\Dedication

\begin{abstract}
\noindent
Tournament organisers supposedly design rules such that a team cannot be strictly better off by exerting a lower effort. However, the European qualification tournaments for recent FIFA soccer World Cups are known to violate this requirement, which inspires our study on the incentive compatibility of similar group-based qualification systems.
Theorems listing the sufficient and necessary conditions of strategy-proofness are provided and applied to classify several soccer qualification tournaments for FIFA World Cups and UEFA European Championships.
Two reasonable mechanisms are proposed to solve the problem of incentive incompatibility: the first is based on abolishing the anonymity of the matches discarded in the comparison of teams from different groups, while the second involves a rethinking of the seeding procedure.
Our results have useful implications for the governing bodies of major sports.

\JEL{C44, D71, Z20}

\AMS{62F07, 91B14}

\keywords{mechanism design; strategy-proofness; manipulation; tournament ranking; FIFA World Cup}
\end{abstract}

\clearpage
\newgeometry{top=25mm,bottom=25mm,left=25mm,right=25mm}

\section{Introduction} \label{Sec1}

Incentive compatibility is a crucial concept of mechanism design at least since the famous Gibbard-Satterthwaite impossibility theorem \citep{Gibbard1973, Satterthwaite1975}, which states that any fair (non-dictatorial) voting rule is susceptible to strategic voting: there always exists a voter who can achieve a more favourable outcome by being insincere in the presence of more than two alternatives. The violation of strategy-proofness may have unwelcome implications such as a governing coalition of parties losing seats by getting more votes \citep{Tasnadi2008}, or a journal improving its ranking by making additional fake citations to other journals \citep{KoczyStrobel2009}.

This issue can be even more essential in professional sports where contestants should exert costly efforts and the tournaments involve high-stake decisions that are familiar to all agents \citep{Kahn2000}.
Consequently, all sporting contests should provide players with appropriate incentives to perform \citep{Szymanski2003}.

While some scholars may argue that sports are a somewhat frivolous topic compared to most other industries, it is of great interest to a high number of people \citep{Wright2014}. For example, the final of the 2010 FIFA World Cup has been watched by about half of the humans who were alive on its day \citep{Palacios-Huerta2014}.\footnote{~FIFA stands for \emph{F\'ed\'eration Internationale de Football Association}, it is the international governing body of soccer.}
Since sports is also a huge business with billion dollars of turnover, the optimal design of tournaments poses an important theoretical problem for both economics \citep{Szymanski2003} and operations research \citep{ScarfYusofBilbao2009}.

Furthermore, it seems that the analysis of problems from sports can also add something for economists. Penalty kicks in soccer provide an excellent environment to test mixed strategies \citep{ChiapporiLevittGroseclose2002, Palacios-Huerta2003}, while penalty shootouts \citep{ApesteguiaPalacios-Huerta2010, KocherLenzSutter2012, Palacios-Huerta2014, VandebroekMcCannVroom2018, ArrondelDuhautoisLaslier2019}, multi-stage chess contests \citep{Gonzalez-DiazPalacios-Huerta2016}, and tennis tiebreaks \citep{MagnusKlaassen1999, Cohen-ZadaKrumerShapir2018} have inspired research on the effect of the order of actions in sequential contests. Revenue sharing in sports can contribute to the development of  allocation rules \citep{BergantinosMoreno-Ternero2019a}. Somewhat surprisingly, even macroeconomics may benefit from the analysis of sports data, for example, \citet{KrauseSzymanski2019} use the performance of national soccer teams to test convergence across countries and the presence of the ``middle income trap''.

The enterprise of exploring sports rules from the perspective of strategy-proofness is still in its infancy, and much remains to be done.
Incentive incompatibility sometimes originates from the fact that being ranked lower in a given (group) stage leads to facing a more preferred competitor in the subsequent (knock-out) stage, which means an advantage in terms of expected probability. For example, \citet{Pauly2014} develops a mathematical model of such manipulation in round-robin subtournaments and derives an impossibility theorem, while \citet{Vong2017} considers the strategic manipulation problem in multistage tournaments and shows that only the top-ranked player can be allowed to qualify from each group in order to make any equilibrium ranking of qualifying players immune to manipulation.

It is also a well-known fact that sports using player drafts with the traditional set-up of reverse order can make losing a profitable strategy \citep{TaylorTrogdon2002, BalsdonFongThayer2007}. This means a serious problem in the National Basketball Association (NBA), the men's professional basketball league in North America, where such ``tanking'' even influences the betting markets \citep{SoebbingHumphreys2013}.
\citet{Fornwagner2019} gives probably the first evidence that there is a concrete strategy behind losing, while \citet{Lenten2016} and \citet{LentenSmithBoys2018} propose an alternative draft-pick allocation policy and evaluate it via a quasi-natural experiment.

These issues and a plethora of similar potential and actual cases \citep{KendallLenten2017} demonstrate that sports administrators often place insufficient resources towards avoiding incentive incompatibility, although this is against the spirit of the game and a misaligned tournament design can easily lead to outrage among consumers.

Here we focus on a potentially more serious problem when a team is \emph{strictly better off} by losing, that is, the manipulation of a match is completely risk-free.
The first academic paper addressing this issue is \citet{DagaevSonin2018}, which proves that tournament systems, consisting of multiple round-robin and knock-out tournaments with noncumulative prizes, usually violate strategy-proofness in the above sense. Their main result reveals the incentive incompatibility of the qualification for the UEFA Europa League, the second-tier competition of European club soccer, before the 2015/16 season.\footnote{~UEFA stands for \emph{Union of European Football Associations}, it is the administrative body of soccer in Europe, although some member states are (partly) located in Asia.}
A follow-up work \citep{Csato2019b} identified the same problem in the qualification for the UEFA Champions League, the most prestigious club competition in European soccer, across the three seasons played between 2016 and 2019.

In particular, the current paper discusses the incentive (in)compatibility of group-based qualification systems---a fundamentally different setting from the multiple qualifiers analysed in \citet{DagaevSonin2018}.
This design has been used in the qualification tournaments for two prominent soccer competitions, the FIFA World Cup (in the European Zone) and UEFA European Championship, among others. While the violation of strategy-proofness by some recent series has already been recognised \citep{DagaevSonin2013, Csato2017d, Csato2018b}, none of these works provide a mathematical model generalising the mere observation, as well as no strategy-proof mechanisms are suggested.

Thus our standalone contributions can be summarised as follows:
(1) the application of the same monotonic ranking for each group including the repechage group (where some teams playing in different groups are compared) is proved to be insufficient for the strategy-proofness of the whole group-based qualification system unless the set of matches considered in the repechage group is chosen appropriately;
(2) theorems listing the sufficient and necessary conditions of incentive compatibility are provided and applied to identify nine real-world qualification systems were susceptible to manipulation;
(3) two different mechanisms are proposed for solving the problem of misaligned incentives.

An intuitive interpretation of the main message is that ranking a set of teams in different pools on a secondary basis along a common denominator is challenging if the size of those pools is not uniform.\footnote{~A related problem is when not the size but the strength of the pools is different, which is not addressed here as being a softer criterion. See \citet{Csato2017c} for a discussion of how to take the varying strength of opponents into account in the ranking of Swiss-system tournaments that are widely used in chess.} In finding a way around it, the criteria (each criterion and their order) for ranking some teams in the secondary sense must be identical to that used in the primary sense, in the pools themselves. Otherwise, manipulation remains possible in certain situations.


The rest of the paper is organised in the following way.
Section~\ref{Sec2} describes the motivating real-world observation, the European qualification tournament for the 2018 FIFA World Cup, and outlines an incentive incompatible scenario in this framework. 
Section~\ref{Sec3} builds and analyses the formal mathematical model of this design, which is applied to examine the strategy-proofness of some recent qualification tournaments from soccer in Section~\ref{Sec4}. 
Section~\ref{Sec5} recommends incentive compatible mechanisms, while Section~\ref{Sec6} concludes.

\section{A real-world example: the 2018 FIFA World Cup qualification tournament (UEFA)} \label{Sec2}

The \href{https://en.wikipedia.org/wiki/2018_FIFA_World_Cup_qualification_(UEFA)}{2018 FIFA World Cup qualification tournament (UEFA)} was the European section of the \href{https://en.wikipedia.org/wiki/2018_FIFA_World_Cup}{2018 FIFA World Cup} qualifiers of national soccer teams, which are members of UEFA.
Russia automatically qualified as a host, therefore---after Gibraltar and Kosovo became FIFA members in May 2016---$54$ teams entered the qualification process for the $13$ slots available in the final tournament.

The format of the competition, confirmed by the UEFA Executive Committee meeting on 22-23 March 2015 in Vienna, is the following:
\begin{itemize}
\item
Group stage (first round): Nine groups of six teams each, playing home-and-away round-robin matches. The winners of each group qualify for the 2018 FIFA World Cup, and the eight best runners-up advance to the play-offs (second round).
\item
Play-offs (second round): The eight best second-placed teams from the group stage
are paired and each pair plays home-and-away matches over two legs. The four winners qualify for the 2018 FIFA World Cup.
\end{itemize}

We focus on the first round, where the tie-breaking rules are \citep[Article~20.6]{FIFA2016}:
(1) greatest number of points obtained in all group matches (with three points for a win, one point for a draw and no points for a defeat);
(2) goal difference in all group matches; and
(3) greatest number of goals scored in all group matches.\footnote{~Strangely, it is not described explicitly that greater goal difference is preferred.}
Further tie-breaking rules will play no role in our discussion.

The choice of the eight best second-placed teams is not addressed in \citet{FIFA2016}, and we were not able to find the relevant regulation. However, according to a FIFA Media Release \citep{FIFA2017a}, which is reinforced by an earlier UEFA press release \citep{UEFA2016b}: ``[\dots] \emph{the eight best runners-up will be decided by ranking criteria as stated in the 2018 FIFA World Cup Regulations, namely points, goal difference, goals scored, goals scored away from home and disciplinary ranking, with the results against teams ranked 6th not being taken into account.}''
\citet{AFC2015} illustrates the ranking of second-placed teams when some group matches are discarded.


Thus the ranking of runners-up strictly follows tie-breaking in groups, with the crucial difference of discarding two matches played against the last-placed team of the group.
It turns out that this, seemingly minor, modification has serious unintended consequences because it allows for possible manipulation of the European qualifiers to the 2018 FIFA World Cup.

As it was mentioned in the Introduction, our starting observation has some history in academic research.
The misaligned incentives have probably been revealed first by a column in the case of the European qualification tournament for the 2014 FIFA World Cup \citep{DagaevSonin2013}.
\citet[p.~1143]{DagaevSonin2018} states that: ``\emph{Still, the problem of misaligned incentives is not restricted to national tournaments. For example, competition rules of the European qualification tournament for the 2014 FIFA World Cup in Brazil suffered from the same problem. And the ``perverse incentives'' situation was not merely a theoretical possibility. Two months before the end of the tournament, with 80\% of games completed, there still was a scenario under which a team might need to achieve a draw instead of winning to go to Brazil.}''
According to \citet{Csato2018b}, France had an incentive to kick two own goals on its last match against Israel in the UEFA Euro 1996 qualifying tournament.

\begin{example} \label{Examp21}
There was a situation with a positive probability in the 2018 FIFA World Cup qualification tournament (UEFA), after four-fifths of all matches were over, under which Bulgaria might need to play a draw instead of winning against Luxembourg on the last matchday of 10 October 2017.

We generate hypothetical results for the last two matchdays in Group A but use the actual results in the other eight groups. It is worth noting that all teams play one match home and one away on the last two matchdays, which is not true for two subsequent matchdays chosen arbitrarily.

\begin{table}[ht!]
\centering
\caption{2018 FIFA World Cup qualification tournament, UEFA Group A---Results}
\label{Table1}

\begin{subtable}{\linewidth}
\centering
\caption{Match results of the first eight matchdays \\ \vspace{0.25cm}
\footnotesize{The positions are given according to the matches already played. \\
The home team is in the row, the away team (represented by its position) is in the column. \\
The dates are given for the matches to be played on the last two matchdays in 2017.}}
\label{Table1a}
\rowcolors{1}{}{gray!20}
    \begin{tabularx}{\linewidth}{cl CCC CCC} \toprule
    Position   & Team      & 1     & 2     & 3     & 4     & 5     & 6 \\ \hline
    1     & France  & ---     & 2-1   & 4-0   & 4-1   & 0-0   & 10 Oct \\
    2     & Sweden  & 2-1   & ---     & 1-1   & 3-0   & 7 Oct  & 4-0 \\
    3     & Netherlands  & 0-1   & 10 Oct  & ---     & 3-1   & 5-0   & 4-1 \\
    4     & Bulgaria  & 7 Oct  & 3-2   & 2-0   & ---     & 4-3   & 1-0 \\
    5     & Luxembourg & 1-3   & 0-1   & 1-3   & 10 Oct  & ---     & 1-0 \\
    6     & Belarus & 0-0   & 0-4   & 7 Oct  & 2-1   & 1-1   & --- \\ \bottomrule
    \end{tabularx}
\end{subtable}

\vspace{0.5cm}
\begin{subtable}{\linewidth}
\centering
\caption{Hypothetical match results of the last two matchdays \\ \vspace{0.25cm}
\footnotesize{The last row shows an alternative result, obtained if Bulgaria manipulates.}}
\label{Table1b}
\rowcolors{1}{}{gray!20}
    \begin{tabularx}{0.9\linewidth}{lLLc} \toprule
    Date  & Home team & Away team & Result \\ \hline \showrowcolors
    7 October 2017 & Sweden & Luxembourg & 0-4 \\
    7 October 2017 & Belarus & Netherlands  & 7-0 \\
    7 October 2017 & Bulgaria  & France & 8-0 \\ \hline
    10 October 2017 & France & Belarus & 1-0 \\
    10 October 2017 & Luxembourg & Bulgaria  & 0-1 \\
    10 October 2017 & Netherlands  & Sweden & 3-0 \\ \midrule \hiderowcolors
    10 October 2017* & Luxembourg* & Bulgaria*  & 1-1* \\\bottomrule
    \end{tabularx}
\end{subtable}
\end{table}

Table~\ref{Table1} shows a possible scenario in Group A. While some results of Table~\ref{Table1b} may be unreasonable, like Belarus defeating the Netherlands by 7-0, they are necessary to create the conditions for manipulation. Nevertheless, this set of match results had a positive probability after eight matchdays were over.

\begin{table}[ht!]
\centering
\caption{2018 FIFA World Cup qualification tournament, UEFA Group A---Standings \\ \vspace{0.25cm}
\footnotesize{Pos = Position; W = Won; D = Drawn; L = Lost; GF = Goals for; GA = Goals against; GD = Goal difference; Pts = Points. All teams have played $10$ matches. \\
The last row contains the second-placed team's adjusted results for the ranking of the runners-up (matches played against the sixth-placed team are discarded) according to \citet{FIFA2017a}.}}
\label{Table2}

\begin{subtable}{\linewidth}
\centering
\caption{Baseline standing with the runner-up results}
\label{Table2a}
\rowcolors{1}{}{gray!20}
    \begin{tabularx}{\linewidth}{Cl CCCC CC >{\bfseries}C} \toprule \showrowcolors
    Pos   & Team  & W     & D     & L     & GF    & GA    & GD    & Pts \\ \hline
    1     & France & 6     & 2     & 2     & 16    & 13    & 3     & 20 \\
    2     & Bulgaria & 6     & 0     & 4     & 22    & 17    & 5     & 18 \\
    3     & Sweden & 5     & 1     & 4     & 18    & 14    & 4     & 16 \\
    4     & Netherlands & 5     & 1     & 4     & 19    & 18    & 1     & 16 \\
    5     & Belarus & 2     & 2     & 6     & 11    & 17    & -6    & 8 \\
    6     & Luxembourg & 2     & 2     & 6     & 11    & 18    & -7    & 8 \\ \midrule \hiderowcolors
    2	  & Bulgaria & 4     & 0     & 4     & 17    & 14    & 3     & 12 \\ \bottomrule    
    \end{tabularx}
\end{subtable}

\vspace{0.5cm}
\begin{subtable}{\linewidth}
\centering
\caption{Alternative standing with the runner-up results if Bulgaria manipulates \\ \vspace{0.25cm}
\footnotesize{The changes are indicated with blue color and underlining.}}
\label{Table2b}
\rowcolors{1}{}{gray!20}
    \begin{tabularx}{\linewidth}{Cl CCCC CC >{\bfseries}C} \toprule \showrowcolors
    Pos   & Team  & W     & D     & L     & GF    & GA    & GD    & Pts \\ \hline
    1     & France & 6     & 2     & 2     & 16    & 13    & 3     & 20 \\
    2     & Bulgaria & \textcolor{blue}{\underline{5}}     & \textcolor{blue}{\underline{1}}     & 4     & 22    & \textcolor{blue}{\underline{18}}    & \textcolor{blue}{\underline{4}}     & \textcolor{blue}{\underline{16}} \\
    3     & Sweden & 5     & 1     & 4     & 18    & 14    & 4     & 16 \\
    4     & Netherlands & 5     & 1     & 4     & 19    & 18    & 1     & 16 \\
    \textcolor{blue}{\underline{5}}     & Luxembourg & 2     & \textcolor{blue}{\underline{3}}     & \textcolor{blue}{\underline{5}}     & \textcolor{blue}{\underline{12}}    & 18    & \textcolor{blue}{\underline{-6}}    & \textcolor{blue}{\underline{9}} \\ 
    \textcolor{blue}{\underline{6}}     & Belarus & 2     & 2     & 6     & 11    & 17    & -6    & 8 \\
\midrule \hiderowcolors
    2	  & Bulgaria & 4     & \textcolor{blue}{\underline{1}}     & \textcolor{blue}{\underline{3}}     & \textcolor{blue}{\underline{20}}    & \textcolor{blue}{\underline{16 }}  & \textcolor{blue}{\underline{4}}     & \textcolor{blue}{\underline{13}} \\ \bottomrule    
    \end{tabularx}
\end{subtable}
\end{table}

The final standing with the hypothetical results are shown in Table~\ref{Table2a}.

\begin{table}[ht!]
\centering
\caption[2018 FIFA World Cup qualification tournament (UEFA)---Ranking of second-placed teams]{2018 FIFA World Cup qualification (UEFA)---Ranking of second-placed teams \\ \vspace{0.25cm}
\footnotesize{Pos = Position; W = Won; D = Drawn; L = Lost; GF = Goals for; GA = Goals against; GD = Goal difference; Pts = Points. \\
Since matches played against the sixth-placed team in each group are discarded \citep{FIFA2017a}, all teams have played eight matches taken into account.}}
\label{Table3}

\begin{subtable}{\linewidth}
\centering
\caption{Baseline scenario}
\label{Table3a}
\rowcolors{1}{}{gray!20}
    \begin{tabularx}{\linewidth}{Clc CCCC CC >{\bfseries}C} \toprule \showrowcolors
    Pos   & Team  & Group & W     & D     & L     & GF    & GA    & GD    & Pts \\ \hline
    1     & Switzerland & B     & 7     & 0     & 1     & 18    & 6     & 12    & 21 \\
    2     & Italy & G     & 5     & 2     & 1     & 12    & 8     & 4     & 17 \\
    3     & Denmark & E     & 4     & 2     & 2     & 13    & 6     & 7     & 14 \\
    4     & Croatia & I     & 4     & 2     & 2     & 8     & 4     & 4     & 14 \\
    5     & Northern Ireland & C     & 4     & 1     & 3     & 10    & 6     & 4     & 13 \\
    6     & Greece & H     & 3     & 4     & 1     & 9     & 5     & 4     & 13 \\
    7     & Republic of Ireland & D     & 3     & 4     & 1     & 7     & 5     & 2     & 13 \\
    8     & Slovakia & F     & 4     & 0     & 4     & 11    & 6     & 5     & 12 \\ \bottomrule
    9     & Bulgaria  & A     & 4     & 0     & 4     & 17    & 14    & 3     & 12 \\ \toprule    
    \end{tabularx}
\end{subtable}

\vspace{0.5cm}
\begin{subtable}{\linewidth}
\centering
\caption{Alternative scenario if Bulgaria manipulates \\ \vspace{0.25cm}
\footnotesize{The changes are indicated with blue color and underlining.}}
\label{Table3b}
\rowcolors{1}{}{gray!20}
    \begin{tabularx}{\linewidth}{Clc CCCC CC >{\bfseries}C} \toprule \showrowcolors
    Pos   & Team  & Group & W     & D     & L     & GF    & GA    & GD    & Pts \\ \hline
    1     & Switzerland & B     & 7     & 0     & 1     & 18    & 6     & 12    & 21 \\
    2     & Italy & G     & 5     & 2     & 1     & 12    & 8     & 4     & 17 \\
    3     & Denmark & E     & 4     & 2     & 2     & 13    & 6     & 7     & 14 \\
    4     & Croatia & I     & 4     & 2     & 2     & 8     & 4     & 4     & 14 \\
    \textcolor{blue}{\underline{5}}     & Bulgaria & A     & 4     & \textcolor{blue}{\underline{1}}    & \textcolor{blue}{\underline{3}}    & \textcolor{blue}{\underline{20}}   & \textcolor{blue}{\underline{16}}   & \textcolor{blue}{\underline{4}}    & \textcolor{blue}{\underline{13}} \\
    \textcolor{blue}{\underline{6}}     & Northern Ireland & C     & 4     & 1     & 3     & 10    & 6     & 4     & 13 \\
    \textcolor{blue}{\underline{7}}     & Greece & H     & 3     & 4     & 1     & 9     & 5     & 4     & 13 \\
    \textcolor{blue}{\underline{8}}     & Republic of Ireland & D     & 3     & 4     & 1     & 7     & 5     & 2     & 13 \\ \bottomrule
    \textcolor{blue}{\underline{9}}     & Slovakia & F     & 4     & 0     & 4     & 11    & 6     & 5     & 12 \\ \toprule    
    \end{tabularx}
\end{subtable}
\end{table}

On the basis of final standing in Groups B--I and Table~\ref{Table2a}, the runners-up are ranked in Table~\ref{Table3a}. Since only the eight best second-placed teams advance to the play-offs, Bulgaria would be eliminated.

However, consider what happens if Bulgaria plays a draw of 1-1 against Luxembourg on the last matchday.
According to Table~\ref{Table2b}, while this change worsens Bulgaria's standing in the group, it remains the runner-up with $16$ points as both Bulgaria and Sweden would have the same goal difference ($+4$) with Bulgaria scoring more goals in all group matches ($22$ vs.\ $18$).\footnote{~Currently, UEFA prefers using head-to-head results instead of goal difference for breaking ties in the number of points. While this might have some influence, for example, on the probability that the relative ranking of two teams depends on a match in which neither team is involved \citep{Berker2014}, the choice of tie-breaking rules does not affect the essence of our example.}
On the other hand, Luxembourg would overtake Belarus thanks to its newly obtained draw because it would have more points ($9$ vs.\ $8$).

In the ranking of the second-placed teams, the matches played against the last-placed team are discarded. Consequently, Bulgaria would have $13$ points, placing it fifth among the runners-up as presented in Table~\ref{Table3b}. Thus Bulgaria would advance to the play-offs instead of Slovakia if it would concede a goal against Luxembourg.
\end{example}



Note that a successful manipulation has three requirements:
(1) the ranking criteria (number of points, goal difference, etc.) of a team should be better among the runners-up by exerting a lower effort in a match;
(2) it should preserve the position of the manipulating team in its group;
(3) it needs to result in a potential gain for this team with respect to qualification, for example, by advancing it to the play-offs instead of being eliminated.
Otherwise, it makes no sense to exert a lower effort.
A situation that occurred in the UEFA European Championship 1996 qualification tournament has satisfied only the first two conditions \citep{Csato2018b}.

\section{Theoretical background} \label{Sec3}

In the following, an abstract model is built for the home-and-away (double) round-robin group stage of a tournament designed similarly to the 2018 FIFA World Cup qualification tournament (UEFA). It begins with several definitions for the sake of accurateness. We will see in the next section that the seemingly long preparation for the main theorems does not affect the generality of the results. For example, it is not required that a win is awarded by three points, a draw is awarded by one point, and a loss is awarded by zero points.

\begin{definition} \label{Def301}
\emph{Round-robin group}:
The pair $(X,S)$ is a \emph{round-robin group} where
\begin{itemize}
\item
$X$ is a finite set of at least two teams;
\item
the \emph{ranking method} $S$ associates a strict order $S(v)$ on the set $X$ for any function $v: X \times X \to \left\{ \left( v_1; v_2 \right): v_1,v_2 \in \mathbb{N} \cup \{ 0 \} \right\} \cup \{ \text{---} \}$ such that $v(x,y) = \text{---}$ if and only if $x=y$.
\end{itemize}
\end{definition}

In a round-robin group, any team plays each other team in $X$ once at home and once at away (the home team and the away team is defined even if the match is played on a neutral field). Function $v$ describes game results with the number of goals scored by the home and the away team, respectively.

Let $(X,S)$ be a round-robin group, $x,y \in X$, $x \neq y$ be two teams, and $v$ be a set of results.
$x$ is ranked higher (lower) than $y$ if and only if $x$ is preferred to $y$ by $S(v)$, that is, $x \succ_{S(v)} y$ ($x \prec_{S(v)} y$).
Consider $v(x,y) = \left( v_1(x,y); v_2(x,y) \right)$, where $v_1(x,y)$ is the number of goals scored by the home team $x$ against the away team $y$, and $v_2(x,y)$ is the number of goals scored by the away team $y$ against the home team $x$.
It is said that team $x$ wins over team $y$ if $v_1(x,y) > v_2(x,y)$ (home) or $v_1(y,x) < v_2(y,x)$ (away), team $x$ loses to team $y$ if $v_1(x,y) < v_2(x,y)$ (home) or $v_1(y,x) > v_2(y,x)$ (away) and team $x$ draws against team $y$ if $v_1(x,y) = v_2(x,y)$ or $v_1(y,x) = v_2(y,x)$.

\begin{definition} \label{Def302}
\emph{Number of points}:
Let $(X,S)$ be a round-robin group, $x \in X$ be a team, $v$ be a set of results, and $\alpha > \beta > \gamma$ be three parameters.
Denote by $N_v^w(x)$ the number of wins, by $N_v^d(x)$ the number of draws, and by $N_v^\ell(x)$ the number of losses of team $x$, respectively.
The \emph{number of points} of team $x$ is $s_v(x) = \alpha N_v^w(x) + \beta N_v^d(x) + \gamma N_v^\ell(x)$.
\end{definition}

In other words, a win means $\alpha$ points, a draw means $\beta$ points and a loss means $\gamma$ points.

\begin{remark} \label{Rem31}
For the sake of simplicity, in the following, we will use the notation $(X,S)$ for a round-robin group $(X,S)$ with the parameters $\alpha$, $\beta$, $\gamma$.
Furthermore, it is assumed that $\alpha > \beta > \gamma$.
\end{remark}

\begin{definition} \label{Def303}
\emph{Monotonicity of the ranking in a round-robin group}:
Let $(X,S)$ be a round-robin group.
Its ranking method is called \emph{monotonic} if $s_v(x) > s_v(y)$ implies $x \succ_{S(v)} y$ for any teams $x,y \in X$ and for any set of results $v$.
\end{definition}

Monotonicity does not necessarily lead to a unique ranking, tie-breaking rules can be arbitrary in the model.

\begin{definition} \label{Def304}
\emph{Group-based qualifier}:
A \emph{group-based qualifier} $\mathcal{T}$ consists of $k$ round-robin groups $\left( X^i, S^i \right)$, $1 \leq i \leq k$ such that $X^i \cap X^j = \emptyset$ for any $i \neq j$, $1 \leq i,j \leq k$.
\end{definition}

To cover the 2018 FIFA World Cup qualification tournament (UEFA), it is allowed to compare teams from different groups in the \emph{repechage group}.

\begin{definition} \label{Def305}
\emph{Repechage function}:
Let $\mathcal{T}$ be a group-based qualifier.
For any set of group results $V = \left\{ v^1, v^2, \dots , v^k \right\}$, a \emph{repechage function} $\mathcal{G}$ associates
\begin{itemize}
\item
a set of teams $\mathcal{G}_1 (V) \subseteq \cup_{i=1}^k X^i$ composing the repechage group;
\item
a set of opponents $\emptyset \neq \mathcal{G}_2 (V,x) \subseteq X^i \setminus \{ x \}$ for each team in the repechage group $x \in \mathcal{G}_1 (V)$; and
\item
a \emph{repechage ranking} $Q(V)$, which is a strict order $Q(V)$ on the set $\mathcal{G}_1(V)$.
\end{itemize}
\end{definition}

\begin{definition} \label{Def306}
\emph{Impartiality of a repechage function}:
Let $\mathcal{T}$ be a group-based qualifier.
Repechage function $\mathcal{G}$ is \emph{impartial} if for all $1 \leq i \leq k$:
\begin{enumerate}[label=\alph*)]
\item \label{Impartiality_con1}
$\left| X^i \cap \mathcal{G}_1 (V) \right| = c^i$, the number of teams from the round-robin group $(X^i, S^i)$ in the repechage group, does not depend on the set of group results $V = \left\{ v^1, v^2, \dots , v^k \right\}$;
\item \label{Impartiality_con2}
$x,y \in X^i \cap \mathcal{G}_1 (V)$ implies $y \in \mathcal{G}_2 (V,x)$ and $x \in \mathcal{G}_2 (V,y)$;
\item \label{Impartiality_con3}
if $x,y \in X^i \cap \mathcal{G}_1 (V)$, then $z \in \mathcal{G}_2 (V,x)$, $z \neq y$ implies $z \in \mathcal{G}_2 (V,y)$;
and
\item \label{Impartiality_con4}
$x,y \in \mathcal{G}_1 (V)$ implies $\left| \mathcal{G}_2 (V,x) \right| = \left| \mathcal{G}_2 (V,y) \right|$.
\end{enumerate}
\end{definition}

According to the impartiality of a repechage function:
\ref{Impartiality_con1} the number of teams transferred to the repechage group from a given group is fixed;
\ref{Impartiality_con2} if two teams in the repechage group have played against each other, then these matches are considered in the repechage group; 
\ref{Impartiality_con3} if two teams are transferred to the repechage group from the same group, then their matches played against any third team should be uniformly considered or discarded in the repechage group; 
and
\ref{Impartiality_con4} the number of matches taken into account for the repechage group is the same for all teams in the repechage group.
The last condition ensures that the number of points is a plausible measure of performance in the repechage group.

\begin{definition} \label{Def307}
\emph{Number of points in the repechage group}:
Let $\mathcal{T}$ be a group-based qualifier and $\mathcal{G}$ be a repechage function.
The \emph{number of points in the repechage group} of team $x \in \mathcal{G}_1(V)$ is:
\begin{eqnarray*}
s_{\mathcal{G}(V)}^{k+1}(x) & = & \alpha \left( \left| \left\{ y \in \mathcal{G}_2 (V,x): v_1^i(x,y) > v_2^i(x,y) \quad \text{or} \quad v_1^i(y,x) < v_2^i(y,x) \right\} \right| \right) + \\
& & + \beta \left( \left| \left\{ y \in \mathcal{G}_2 (V,x): v_1^i(x,y) = v_2^i(x,y) \quad \text{or} \quad v_1^i(y,x) = v_2^i(y,x) \right\} \right| \right) + \\
& & + \gamma \left( \left| \left\{ y \in \mathcal{G}_2 (V,x): v_1^i(x,y) < v_2^i(x,y) \quad \text{or} \quad v_1^i(y,x) > v_2^i(y,x) \right\} \right| \right).
\end{eqnarray*}
\end{definition}

Thus the number of points in the repechage group is calculated on the basis of the matches considered there.

\begin{definition} \label{Def308}
\emph{Monotonicity of a repechage function}:
Let $\mathcal{T}$ be a group-based qualifier and $\mathcal{G}$ be a repechage function.
Repechage function $\mathcal{G}$ is said to be \emph{monotonic} if $s_{\mathcal{G}(V)}^{k+1}(x) > s_{\mathcal{G}(V)}^{k+1}(y)$ implies $x \succ_{Q(V)} y$ for any teams $x,y \in \mathcal{G}_1(V)$ and any set of group results $V = \left\{ v^1, v^2, \dots , v^k \right\}$.
\end{definition}

\begin{definition} \label{Def309}
\emph{Allocation rule}:
Let $\mathcal{T}$ be a group-based qualifier and $\mathcal{G}$ be a repechage function.
For any set of group results $V = \left\{ v^1, v^2, \dots , v^k \right\}$, an \emph{allocation rule} $\mathcal{G}$ associates a value from the set $\left\{ 0; 1; 2 \right\}$ for all teams $x \in \cup_{i=1}^k X^i$.
\end{definition}

Consider a group-based qualifier $\mathcal{T}$, a repechage function $\mathcal{G}$, an allocation rule $\mathcal{R}$, and a set of group results $V = \left\{ v^1, v^2, \dots , v^k \right\}$.
Team $x \in \cup_{i=1}^k X^i$ is said to be: (a) directly qualified if $\mathcal{R}(V,x) = 2$; (b) advanced to the next round (with a chance to qualify) if $\mathcal{R}(V,x) = 1$; and (c) eliminated if $\mathcal{R}(V,x) = 0$.

\begin{definition} \label{Def310}
\emph{Groups-based qualification system}:
The triple $(\mathcal{T}, \mathcal{G}, \mathcal{R})$ of a group-based qualifier $\mathcal{T}$, a repechage function $\mathcal{G}$, and an allocation rule $\mathcal{R}$ is a \emph{group-based qualification system}.
\end{definition}

The outcome of any group-based qualification system $(\mathcal{T}, \mathcal{G}, \mathcal{R})$ is given by the allocation rule $\mathcal{R}$, which implies the following.

\begin{remark} \label{Rem32}
Let $(\mathcal{T}, \mathcal{G}, \mathcal{R})$ be a group-based qualification system.
If there is no difference in the allocation of teams in the repechage group, that is, $\mathcal{R}(V,x) = \mathcal{R}(V,y)$ for all teams $x,y \in \mathcal{G}_1(V)$ under any set of group results $V = \left\{ v^1, v^2, \dots , v^k \right\}$, then the group-based qualification system can be equivalently described without a repechage group, thus $\mathcal{G}_1(V) = \emptyset$ can be assumed without loss of generality.
\end{remark}

\begin{definition} \label{Def311}
\emph{Monotonicity of a group-based qualification system}:
Let $(\mathcal{T}, \mathcal{G}, \mathcal{R})$ be a group-based qualification system.
It is called \emph{monotonic} if under any set of group results $V = \left\{ v^1, v^2, \dots , v^k \right\}$:
\begin{itemize}
\item
there exists a common monotonic ranking $S = S^i$ for each round-robin group $\left( X^i, S^i \right)$ such that $x \succ_{S(v^i)} y$ implies $\mathcal{R}(V,x) \geq \mathcal{R}(V,y)$ for all $x,y \in X^i$, $1 \leq i \leq k$;
\item
the repechage function $\mathcal{G}$ impartial and monotonic such that $x \succ_{Q(V)} y$ implies $\mathcal{R}(V,x) \geq \mathcal{R}(V,y)$ for all $x,y \in \mathcal{G}_1(V)$.
\end{itemize}
\end{definition}

The idea behind a monotonic group-based qualification system is straightforward.
Because of the application of a monotonic group ranking, teams have no incentive to exert a lower effort in any match as they cannot achieve a higher position in their group by deliberately playing worse.
Impartiality of the repechage function provides the comparability of the results taken into account for the repechage group, where the ranking ensures incentive compatibility, too.

\begin{proposition} \label{Prop31}
2018 FIFA World Cup qualification tournament (UEFA), discussed in Section~\ref{Sec2}, fits into the model presented above: it is a monotonic group-based qualification system.
\end{proposition}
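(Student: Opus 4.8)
The plan is to exhibit the explicit correspondence between the components of the 2018 UEFA qualifier and the objects of Definitions~\ref{Def304}--\ref{Def310}, and then to verify each clause of the monotonicity requirement in Definition~\ref{Def311}. First I would set $k = 9$ and let each $\left( X^i, S^i \right)$ be one of the nine six-team groups, where the common ranking method $S = S^i$ orders teams lexicographically by the number of points (with $\alpha = 3$, $\beta = 1$, $\gamma = 0$), then by goal difference, and finally by goals scored, as prescribed by the FIFA regulations. Because a strictly larger number of points forces a higher position under this lexicographic rule, $S$ is monotonic in the sense of Definition~\ref{Def303}; the crucial feature is that the same criteria in the same order are applied in every group, so the common-ranking clause is automatically met.

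Next I would construct the repechage function $\mathcal{G}$. For any profile of group results $V$, let $\mathcal{G}_1(V)$ consist of the nine runners-up, that is, the second-ranked team of each group under $S(v^i)$, and for a runner-up $x$ in group $i$ set $\mathcal{G}_2(V,x) = X^i \setminus \{ x, z^i \}$, where $z^i$ denotes the last-placed (sixth) team of that group; the repechage ranking $Q(V)$ again orders these teams lexicographically by points, goal difference, and goals scored computed over the retained matches. Verifying impartiality (Definition~\ref{Def306}) is then straightforward: clause~\ref{Impartiality_con1} holds with $c^i = 1$ for every $i$; clauses~\ref{Impartiality_con2} and~\ref{Impartiality_con3} are vacuously true because no two teams in $\mathcal{G}_1(V)$ ever come from the same group; and clause~\ref{Impartiality_con4} holds because after discarding the two matches against $z^i$ each runner-up has exactly four opponents, so $\left| \mathcal{G}_2(V,x) \right| = 4$ for all $x$. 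Monotonicity of $\mathcal{G}$ (Definition~\ref{Def308}) follows from the same lexicographic argument as for $S$.

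Finally I would specify the allocation rule $\mathcal{R}$: set $\mathcal{R}(V,x) = 2$ for the nine group winners, $\mathcal{R}(V,x) = 1$ for the eight highest-ranked runners-up under $Q(V)$, and $\mathcal{R}(V,x) = 0$ for every remaining team (the ninth runner-up and all teams ranked third through sixth). The two conditions of Definition~\ref{Def311} then reduce to finite case checks. Within a group, $x \succ_{S(v^i)} y$ means $x$ sits at least as high as $y$, and since the winner always receives $2$, the runner-up receives $1$ or $0$, and every lower-placed team receives $0$, the inequality $\mathcal{R}(V,x) \ge \mathcal{R}(V,y)$ holds in each configuration. Within the repechage group, $x \succ_{Q(V)} y$ together with the top-eight cut-off immediately gives $\mathcal{R}(V,x) \ge \mathcal{R}(V,y)$.

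There is no serious obstacle here: once the correspondence is fixed, everything is a finite verification. The only points demanding care are the observation that impartiality clauses~\ref{Impartiality_con2} and~\ref{Impartiality_con3} hold \emph{vacuously} because exactly one team per group advances to the repechage group, and the checking of clause~\ref{Impartiality_con4}, which is precisely where the uniform discarding of the matches against the sixth-placed team matters. I would state these as explicit observations so that the reader sees why the seemingly minor discarding of matches is nonetheless compatible with impartiality.
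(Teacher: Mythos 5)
Your proposal is correct and takes essentially the same approach as the paper's own proof: both exhibit the explicit correspondence (nine six-team groups, the runners-up forming the repechage group, matches against the sixth-placed team discarded so that $\left| \mathcal{G}_2(V,x) \right| = 4$, and the winner / best-eight-runners-up allocation rule) and then verify the clauses of Definition~\ref{Def311} via the fact that the number of points is the first tie-breaker both in the groups and in the repechage group. Your write-up is somewhat more explicit than the paper's---notably in checking the impartiality clauses of Definition~\ref{Def306} one by one and observing that clauses~\ref{Impartiality_con2} and~\ref{Impartiality_con3} hold vacuously---but the substance is identical.
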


\begin{proof}
There are $k=9$ groups and all criteria of Definition~\ref{Def311} hold:
\begin{itemize}
\item
the common group ranking $S$ is monotonic because the number of points is the first tie-breaker in the groups (Definition~\ref{Def303}), and:
\begin{itemize}[label=$\diamond$]
\item
the first-placed team in each group qualifies: $\mathcal{R}(V,x) = 2$ for each $x \in X^i$ if and only if $\left| \left\{ y \in X^i: y \succ_{S(v^i)} x \right\} \right| = 0$;
\item
the third-, fourth-, fifth- and sixth-placed teams in each group are eliminated: $\mathcal{R}(V,x) = 0$ for each $x \in X^i$ if $\left| \left\{ y \in X^i: y \succ_{S(v^i)} x \right\} \right| \geq 2$;
\end{itemize}
\item
the repechage function $\mathcal{G}$ is impartial (Definition~\ref{Def306}):
\begin{itemize}[label=$\diamond$]
\item
the repechage group consists of the runners-up: $x \in X^i \cap \mathcal{G}_1(V)$, $1 \leq i \leq 9$ if and only if $\left| \left\{ y \in X^i: y \succ_{S(v^i)} x \right\} \right| = 1$, so $\left| X^i \cap \mathcal{G}_1(V) \right| = 1$;
\item
the eight matches played against the rest of the first five teams of the group are considered in the repechage group: for each $x \in X^i \cap \mathcal{G}_1(V)$, $1 \leq i \leq 9$, $z \in \mathcal{G}_2(V,x)$ if and only if $z \in X^i$ and $\left| \left\{ y \in X^i: y \succ_{S(v^i)} z \right\} \right| \leq 4$;
\end{itemize}
\item
the repechage function $\mathcal{G}$ is monotonic because the number of points is the first tie-breaker in the repechage group (Definition~\ref{Def308}), and:
\begin{itemize}[label=$\diamond$]
\item
the eight best second-placed teams advance to the next round: $\mathcal{R}(V,x) = 1$ if $x \in \mathcal{G}_1(V)$ and $\left| \left\{ y \in \mathcal{G}_1(V): y \succ_{Q(V)} x \right\} \right| \leq 7$;
and
\item
the lowest-ranked second-placed team is eliminated: $\mathcal{R}(V,x) = 0$ if $x \in \mathcal{G}_1(V)$ and $y \succ_{Q(V)} x$ for all $y \in \mathcal{G}_1(V) \setminus \{ x \}$.
\end{itemize}
\end{itemize}
\end{proof}

Now we turn to the issue of incentive compatibility.

\begin{definition} \label{Def312}
\emph{Manipulation}:
Let $(\mathcal{T}, \mathcal{G}, \mathcal{R})$ be a group-based qualification system.
A team $x \in X^i$, $1 \leq i \leq k$ can \emph{manipulate} the qualification system if there exist two sets of group results $V = \left\{ v^1, v^2, \dots , v^i, \dots , v^k \right\}$ and $\bar{V} = \left\{ v^{1}, v^2, \dots , \bar{v}^i, \dots , v^k \right\}$ such that $v_2^i(x,y) \leq \bar{v}_2^i(x,y)$ and $v_1^i(y,x) \leq \bar{v}_1^i(y,x)$ for all $y \in X^i$, furthermore, $\mathcal{R}(V,x) < \mathcal{R}(\bar{V},x)$.
\end{definition}

To summarise, manipulation means that team $x$ improves its status with respect to qualification by letting one of its opponents to score more goals.

\begin{definition} \label{Def313}
\emph{Strategy-proofness of a group-based qualification system}:
A group-based qualification system $(\mathcal{T}, \mathcal{G}, \mathcal{R})$ is called \emph{strategy-proof} if there exists no set of group results $V = \left\{ v^1, v^2, \dots ,v^k \right\}$ under which a team $x \in \cup_{i=1}^k X^i$ can manipulate.
\end{definition}

Another notion is needed to formulate the theoretical results.

\begin{definition} \label{Def314}
\emph{Permutation of results in a group}:
Let $\mathcal{T}$ be a group-based qualifier and $V = \left\{ v^1, v^2, \dots , v^i, \dots , v^k \right\}$ be a set of group results.
Consider a round-robin group $\left( X^i, S^i \right)$, $1 \leq i \leq k$, and a permutation $\sigma: X^i \to X^i$.
\emph{Permutation of results in a group} leads to the set of group results $\left\{ v^1, v^2, \dots , \sigma(v)^i, \dots , v^k \right\}$ such that $\sigma(v)^i(x,y) = v^i \left( \sigma^{-1}(x),\sigma^{-1}(y) \right)$ for all teams $x,y \in X^i$.
\end{definition}

Permutation of group results is illustrated by the following example.

\begin{example} \label{Examp31}
Consider the 2018 FIFA World Cup qualification (UEFA), which is a (monotonic) group-based qualification system according to Proposition~\ref{Prop31}, and the set of group results $V$ given by the actual results in Groups B--I and the hypothetical group results in Group A according to Table~\ref{Table1}.
Let $\sigma$ be the permutation such that $\sigma (\text{France}) = \text{Sweden}$, $\sigma (\text{Sweden}) = \text{Netherlands}$, and $\sigma (\text{Netherlands}) = \text{France}$, while $\sigma(y) = y$ for all teams $y \notin \{ \text{France, Netherlands, Sweden} \}$.
Then France would have the same results under $\sigma(V)$ as Sweden has under $V$, Sweden would have the same results under $\sigma(V)$ as Netherlands has under $V$, and Netherlands would have the same results under $\sigma(V)$ as France has under $V$. Consequently, Netherlands would qualify under $\sigma(V)$ as the group winner, France would have a goal difference of $+4$, and Sweden would have a goal difference of $+1$ with $16$ points (see Table~\ref{Table2a}).
\end{example}

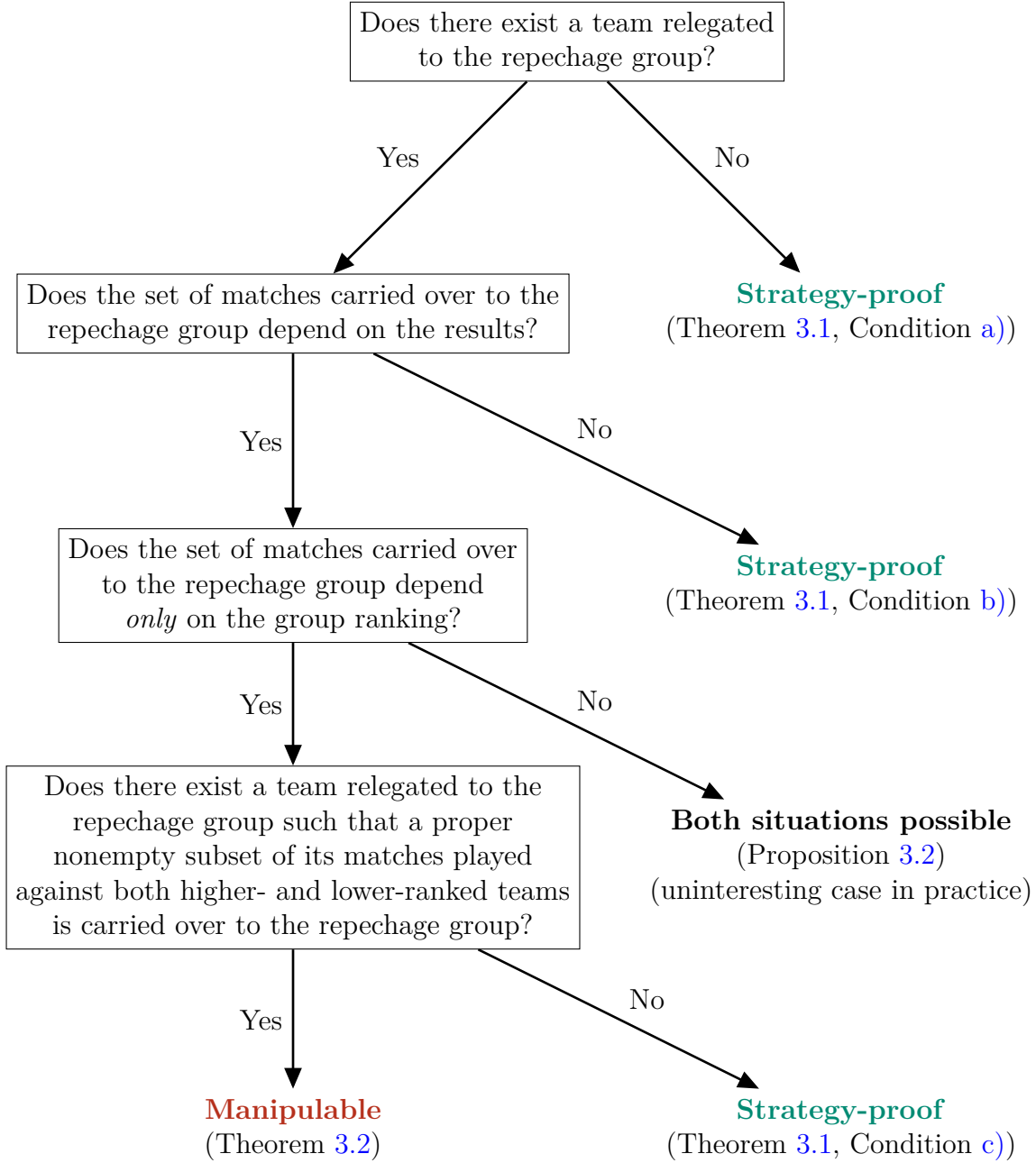
\begin{figure}[ht!]
\centering
\begin{tikzpicture}[scale=1,auto=center, transform shape, >=triangle 45]
\tikzstyle{every node}=[draw,align=center];
  \node (C1) at (0,16) {Does there exist a team relegated \\ to the repechage group?};
  \node (C2) at (-4,12) {Does the set of matches carried over to the \\ repechage group depend on the results?};
  \node (C4) at (-4,8) {Does the set of matches carried over \\ to the repechage group depend \\ \emph{only} on the group ranking?};
  \node (C6) at (-4,4) {Does there exist a team relegated to the \\ repechage group such that a proper \\ nonempty subset of its matches played \\ against both higher- and lower-ranked teams \\ is carried over to the repechage group?};
  
\tikzstyle{every node}=[align=center];
  \node[align=center] at (0,18) {\textbf{The strategy-proofness of a monotonic group-based qualification system} \\ (The conditions should be checked separately for each group)};
  \node (C3) at (4,12)   {\textcolor{PineGreen}{\textbf{Strategy-proof}} \\ (Theorem~\ref{Theo31}, Condition~\ref{Con1})};
  \node (C5) at (4,8)   {\textcolor{PineGreen}{\textbf{Strategy-proof}} \\ (Theorem~\ref{Theo31}, Condition~\ref{Con2})};
  \node (C7) at (4,4)   {\textbf{Both situations possible} \\ (Proposition~\ref{Prop32}) \\ (uninteresting case in practice)};
  \node (C8) at (-4,0) {\textcolor{BrickRed}{\textbf{Manipulable}} \\ (Theorem~\ref{Theo32})};
  \node (C9) at (4,0)  {\textcolor{PineGreen}{\textbf{Strategy-proof}} \\ (Theorem~\ref{Theo31}, Condition~\ref{Con3})};

\tikzstyle{every node}=[align=center];  
  \draw [->,line width=1pt] (C1) -- (C2)  node [midway,above left] {Yes};
  \draw [->,line width=1pt] (C1) -- (C3)  node [midway,above right] {No};
  \draw [->,line width=1pt] (C2) -- (C4)  node [midway,left] {Yes};
  \draw [->,line width=1pt] (C2) -- (C5)  node [midway,above right] {No};
  \draw [->,line width=1pt] (C4) -- (C6)  node [midway,left] {Yes};
  \draw [->,line width=1pt] (C4) -- (C7)  node [midway,above right] {No};
  \draw [->,line width=1pt] (C6) -- (C8)  node [midway,left] {Yes};
  \draw [->,line width=1pt] (C6) -- (C9)  node [midway,above right] {No};
\end{tikzpicture}
\caption{A summary of the main result}
\label{Fig1}
\end{figure}


Our main contribution concerns the strategy-proofness of a monotonic group-based qualification system, summarised in Figure~\ref{Fig1}.

\begin{theorem} \label{Theo31}
Let $(\mathcal{T}, \mathcal{G}, \mathcal{R})$ be a monotonic group-based qualification system such that at least one of the following conditions hold for all round-robin groups $\left( X^i, S \right)$, $1 \leq i \leq k$:
\begin{enumerate}[label=\alph*)]
\item \label{Con1}
no team is transferred to the repechage group:
$\left| X^i \cap \mathcal{G}_1(V) \right| = 0$;
\item \label{Con2}
for each team $x \in X^i \cap \mathcal{G}_1(V)$, 
its set of matches considered in the repechage group is independent of the set of group results $V$:
$\mathcal{G}_2(V,x) = \mathcal{G}_2(\sigma(V),x)$ for any permutation of results in this group $\sigma$ that satisfies $x \in \left( X^i \cap \mathcal{G}_1(V) \cap \mathcal{G}_1(\sigma(V)) \right)$;
\item \label{Con3}
for each team $x \in X^i \cap \mathcal{G}_1(V)$, 
its set of matches considered in the repechage group depends only on the group ranking and either all or none of its matches played against both higher- and lower-ranked teams are considered in the repechage group under any set of group results $V$:
$\mathcal{G}_2(\hat{V},x) = \mathcal{G}_2(\tilde{V},x)$ if $y \succ_{S(\hat{v}^i)} z$ implies $y \succ_{S(\tilde{v}^i)} z$ for any sets of group results $\hat{V}, \tilde{V}$ and for all teams $y,z \in X^i$, furthermore, $x \prec_{S(v^i)} y$ implies $y \in \mathcal{G}_2(V,x)$ or $y \notin \mathcal{G}_2(V,x)$, and $x \succ_{S(v^i)} y$ implies $y \in \mathcal{G}_2(V,x)$ or $y \notin \mathcal{G}_2(V,x)$ for all teams $y \in X^i$. 
\end{enumerate}
Then this group-based qualification system is strategy-proof.
\end{theorem}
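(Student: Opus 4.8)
The plan is to argue by contradiction. Assume some team $x \in X^i$ can manipulate, so by Definition~\ref{Def312} there are result sets $V$ and $\bar{V}$ differing only in the matches of $x$ in group $i$, with $x$ conceding weakly more goals under $\bar{V}$, yet $\mathcal{R}(V,x) < \mathcal{R}(\bar{V},x)$. The first step is an elementary monotonicity lemma: conceding more goals can only turn a win of $x$ into a draw or loss and a draw into a loss, so by Definition~\ref{Def302} it weakly lowers $x$'s own number of points, $s_{\bar{v}^i}(x) \le s_{v^i}(x)$, while it weakly raises the number of points of every opponent $y \in X^i \setminus \{x\}$. Feeding this into the common monotonic group ranking $S$ of Definition~\ref{Def311} shows that every team lying strictly above $x$ in points stays strictly above, so $x$'s standing in its own group cannot improve. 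Consequently any allocation gain must be produced inside the repechage group, which forces $x \in \mathcal{G}_1(V) \cap \mathcal{G}_1(\bar{V})$, since a team that only drops in the group cannot newly enter the repechage band.

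The second step localises the benefit to the repechage points $s_{\mathcal{G}(V)}^{k+1}(x)$ of Definition~\ref{Def307} and rules it out under each condition. Under~\ref{Con1}, $\left| X^i \cap \mathcal{G}_1(V) \right| = 0$ contradicts $x \in \mathcal{G}_1(V)$ outright. Under~\ref{Con2}, the considered set $\mathcal{G}_2(V,x)$ is unchanged by the manipulation, so the repechage points are recomputed over the very same matches, each now yielding $x$ a weakly worse result; hence $s_{\mathcal{G}(\bar{V})}^{k+1}(x) \le s_{\mathcal{G}(V)}^{k+1}(x)$. Under~\ref{Con3}, $\mathcal{G}_2(V,x)$ depends only on the group ranking and is of the all-or-none type, i.e.\ it is one of ``all matches against higher-ranked teams'', ``all matches against lower-ranked teams'', or ``all matches'' (the empty option being excluded by Definition~\ref{Def305}). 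Lowering effort only moves $x$ downward and may relabel some opponents from lower- to higher-ranked, but the all-or-none structure guarantees that $x$ cannot retain its favourable results while selectively discarding an unfavourable one, so a direct comparison of the two sums again gives $s_{\mathcal{G}(\bar{V})}^{k+1}(x) \le s_{\mathcal{G}(V)}^{k+1}(x)$.

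The third step converts the non-increase of repechage points into a non-increase of the repechage allocation. Because $V$ and $\bar{V}$ coincide outside group $i$, every repechage team coming from a group $j \neq i$ keeps exactly its repechage points, while every repechage teammate of $x$ from group $i$ only gains points from $x$'s reduced effort; combined with $s_{\mathcal{G}(\bar{V})}^{k+1}(x) \le s_{\mathcal{G}(V)}^{k+1}(x)$ and the monotonicity of $\mathcal{G}$ and of $\mathcal{R}$ along $Q$ (Definition~\ref{Def311}), the set of repechage teams ranked above $x$ cannot shrink. Hence $\mathcal{R}(\bar{V},x) \le \mathcal{R}(V,x)$, contradicting the assumed manipulation.

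The main obstacle is tie-breaking. Monotonicity of $S$ and of $\mathcal{G}$ links only the \emph{strict} comparison of points to the ranking, leaving equal-point ties to an arbitrary, result-dependent tie-breaker, and the decisive rank comparisons above are made across the two different result sets $V$ and $\bar{V}$. The device that settles this is the permutation-of-results construction of Definition~\ref{Def314} together with the neutrality of the ranking methods, which read off the results and ignore team labels, as already exploited in Example~\ref{Examp31}: a suitable permutation replaces each cross-set comparison of $x$ with itself by a within-set comparison of two distinct teams, to which the allocation-monotonicity of Definition~\ref{Def311} applies verbatim. This bookkeeping is routine under~\ref{Con1} and~\ref{Con2} but genuinely delicate under~\ref{Con3}, where one must track exactly which opponents cross $x$ in the group ranking and verify, via the all-or-none property, that no such crossing can turn an unfavourable discarded match into a favourable retained one.
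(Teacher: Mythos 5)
Your first three paragraphs are, in substance, the paper's own proof written out in full: the paper disposes of condition~\ref{Con1} by monotonicity of $S$ alone, of condition~\ref{Con2} by noting that $\mathcal{G}_2(V,x)$ cannot change, and of condition~\ref{Con3} by noting that $x$ cannot alter which of its matches are discarded, in each case invoking the monotonicity of $S$ and of $\mathcal{G}$ exactly as you do. One refinement of yours is worth keeping: for condition~\ref{Con3} the ``relabelling'' worry can be eliminated outright, since if $x$ keeps its group position, every team with strictly more points stays strictly above it while $x$'s points do not increase, and the number of teams ranked above $x$ is pinned by that position; hence the set of higher-ranked (and therefore also of lower-ranked) opponents is unchanged \emph{as a set}, so under~\ref{Con3} the matches considered in the repechage group are literally the same matches under $V$ and $\bar{V}$.

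The genuine gap is your final paragraph. The device you claim settles tie-breaking is not available in the model: Definition~\ref{Def314} only defines what the permuted results \emph{are}; nothing asserts neutrality, i.e.\ $S(\sigma(v)) = \sigma(S(v))$ (Example~\ref{Examp31} tacitly uses such label-invariance, but it is never an axiom). More fatally, even granting neutrality, the two result sets $V$ and $\bar{V}$ of Definition~\ref{Def312} are not permutations of one another---the manipulated matches have genuinely different scores, not relabelled teams---so no permutation can convert the cross-set comparison of $x$ with itself into a within-set comparison of two distinct teams, and neutrality places no constraint linking $S(v^i)$ and $S(\bar{v}^i)$ on teams with equal points. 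Consequently the tie-break flips you flag in steps 1 and 3 remain unhandled by your argument. Note that the paper's own proof does not handle them either: it implicitly treats the group ranking, the repechage ranking and the allocation as depending on the results only through points, so that ties never break against the direction of the points comparison. The honest options are to state that assumption explicitly (or to observe that the practical tie-breakers---goal difference, goals scored---are themselves weakly monotone under the manipulation of Definition~\ref{Def312}), or to drop the final paragraph and accept the same blind spot as the paper; as written, the claim that the permutation construction ``settles'' the obstacle is unjustified.
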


See Remark~\ref{Rem32} for the first condition.
According to the second requirement, if a permutation of group results does not affect the relegation of a given team to the repechage group, then its set of matches carried over to the repechage group should remain the same.
The last condition of Theorem~\ref{Theo31} means that all or none of matches played against higher- and lower-ranked teams should be considered in the repechage group.

\begin{proof}
\ref{Con1}
The monotonicity of the group ranking $S$ provides incentive compatibility.

\ref{Con2}
Since $\mathcal{G}_2(V,x) = \mathcal{G}_2(\bar{V},x)$ for team $x \in X^i \cap \mathcal{G}_1(V)$ under any sets of group results $V,\bar{V}$, incentive compatibility is provided by the monotonicity of the group ranking $S$ and the repechage function $\mathcal{G}$.

\ref{Con3}
In this case, the team $x \in X^i \cap \mathcal{G}_1(V)$ cannot change the set of its matches to be discarded in the repechage group, therefore incentive compatibility is provided by the monotonicity of the group ranking $S$ and repechage function $\mathcal{G}$.
\end{proof}

The message of Theorem~\ref{Theo31} in practice will be discussed later.

\begin{theorem} \label{Theo32}
Let $(\mathcal{T}, \mathcal{G}, \mathcal{R})$ be a monotonic group-based qualification system such that the following conditions hold for at least one round-robin group $\left( X^i, S \right)$, $1 \leq i \leq k$:
\begin{enumerate}[label=\alph*)]
\item
a team is transferred to the repechage group:
$\left| X^i \cap \mathcal{G}_1(V) \right| > 0$;\footnote{~According to Remark~\ref{Rem32}, this implies that there is a difference in the allocation of teams from the repechage group: if $x \in \mathcal{G}_1(V)$, then there exists at least one team $y \in \mathcal{G}_1(V)$ such that $\mathcal{R}(V,x) \neq \mathcal{R}(V,y)$.}
\item
for each team $x \in X^i \cap \mathcal{G}_1(V)$,
its set of matches considered in the repechage group depends only on the group ranking:
$\mathcal{G}_2(\hat{V},x) = \mathcal{G}_2(\tilde{V},x)$ if $y \succ_{S(\hat{v}^i)} z$ implies $y \succ_{S(\tilde{v}^i)} z$ for any sets of group results $\hat{V}, \tilde{V}$ and all teams $y,z \in X^i$;
and
\item
there exists a team $x \in X^i \cap \mathcal{G}_1(V)$ such that a proper nonempty subset of its matches played against higher- or lower-ranked teams are considered in the repechage group under any set of group results $V$:
\[
\emptyset \neq \left( \mathcal{G}_2(V,x) \cap \left\{ y \in X^i: y \succ_{S(v^i)} x \right\} \right) \subset \left\{ y \in X^i: y \succ_{S(v^i)} x \right\}, \text{ or}
\]
\[
\emptyset \neq \left( \mathcal{G}_2(V,x) \cap \left\{ y \in X^i: x \succ_{S(v^i)} y \right\} \right) \subset \left\{ y \in X^i: x \succ_{S(v^i)} y \right\}.
\]
\end{enumerate}
Then this group-based qualification system $(\mathcal{T}, \mathcal{G}, \mathcal{R})$ violates strategy-proofness.
\end{theorem}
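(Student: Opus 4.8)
The plan is to refute strategy-proofness \emph{constructively}: I would exhibit a concrete profile of group results together with a single unilateral ``conceding'' deviation, in the sense of Definition~\ref{Def312}, under which the manipulating team is strictly better off, directly generalising Example~\ref{Examp21}. Fix the round-robin group $\left( X^i, S \right)$ and the team $x \in X^i \cap \mathcal{G}_1(V)$ supplied by condition (c), and assume the lower-ranked alternative holds (the higher-ranked case is entirely symmetric, with the roles of the teams $x$ beats and loses to interchanged). Because condition (b) makes the carried-over set $\mathcal{G}_2(\cdot,x)$ a function of the group ranking alone, and condition (c) guarantees that among the teams ranked below $x$ at least one match is carried over and at least one is discarded, there must be two \emph{consecutive} ranks below $x$ --- call the teams occupying them $a$ (upper) and $b$ (lower) --- across which the carried-over status switches. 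The deviation will simply transpose $a$ and $b$ in the ranking.

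Concretely, I would build a baseline profile $V$ on $X^i$ in which $a$ and $b$ are tied on points with the tie-breaker placing $a$ immediately above $b$, while $x$ sits above both with a safety margin of several points. I then choose $x$'s head-to-head results so that the match destined to become \emph{kept} after the transposition is a lopsided win for $x$ and the one destined to become \emph{discarded} is a heavy loss; depending on whether the kept rank is the upper or the lower of the two consecutive ranks, the strong result is assigned to $b$ or to $a$ accordingly. The deviation $\bar{V}$ is obtained by letting the relevant opponent score just enough additional goals against $x$ (e.g.\ turning a one-goal win into a draw) so that it overtakes its neighbour on points and the pair is transposed, the margin keeping $x$'s own standing unchanged. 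Since by Definition~\ref{Def312} only $x$'s matches are altered and only in the conceding direction, and since the sole ranking change below $x$ is the $a$--$b$ transposition, condition (b) forces $x$'s carried-over set to change in exactly one place: the formerly discarded strong result replaces the formerly kept weak one. By Definition~\ref{Def307} and the monotonicity of the repechage function (Definition~\ref{Def308}), $x$'s repechage score rises by $\Delta = (\text{points for the strong result}) - (\text{points for the weak result}) > 0$, and lopsided scores make $\Delta$ as large as needed (of order $\alpha - \gamma$).

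It remains to convert this strict gain in repechage points into a strict gain in the allocation $\mathcal{R}$, and this pivotality step is where the real work lies. I would use the freedom in the other $k-1$ groups --- whose internal results are irrelevant to group $i$ but which populate the repechage pool --- to install a ``threshold'' runner-up $w$ from another group whose repechage score $P'$ lies strictly between $x$'s baseline score $P$ and its manipulated score $P + \Delta$, parking the remaining runners-up far above or far below these values. By monotonicity of $Q$ we then have $w \succ_{Q(V)} x$ but $x \succ_{Q(\bar{V})} w$, so the deviation moves $x$ across $w$. The footnote to condition (a) (via Remark~\ref{Rem32}) guarantees that the allocation is not constant on $\mathcal{G}_1$; calibrating the pool so that $w$'s allocation class lies strictly above $x$'s baseline class, the weak monotonicity of $\mathcal{R}$ along $Q$ required by Definition~\ref{Def311} yields $\mathcal{R}(\bar{V},x) \geq \mathcal{R}(\bar{V},w) > \mathcal{R}(V,x)$, a successful manipulation contradicting Definition~\ref{Def313}.

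The main obstacle is precisely this last calibration: the model fixes $\mathcal{R}$ only up to weak monotonicity along $Q$ together with the existence of one allocation difference, so I cannot simply invoke a ``top-$m$-advance'' cut-off. The care lies in choosing the surrounding repechage field so that $x$'s gain straddles an allocation \emph{drop} guaranteed to exist, while ensuring that overtaking $w$ genuinely promotes $x$ rather than merely demoting $w$. The argument is that, since $\mathcal{R}$ is non-increasing along $Q$ and not constant, there is a score level at which the allocation class strictly improves; positioning $P$ just below and $P + \Delta$ just above that level --- feasible because $\Delta$ can be taken large and the other groups' goal tallies are free --- delivers the strict inequality. Once group $i$ is handled, the symmetric higher-ranked case, together with the observation that the construction only ever modifies matches of $x$, completes the proof.
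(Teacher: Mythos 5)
Your group-stage construction is essentially the paper's own argument in generalised form: the paper establishes Theorem~\ref{Theo32} through Example~\ref{Examp32}, where the runner-up of a four-team group concedes a match so as to transpose the two teams ranked below it whose carried-over status differs, thereby exchanging a weak counted result for a strong discarded one (Cases~\ref{Case2}, \ref{Case3}, \ref{Case5} and \ref{Case6}). Your ``two consecutive ranks below $x$ across which the carried-over status switches'' argument is exactly this idea, extended to arbitrary group sizes, positions, and both alternatives of condition (c), and that part of the proposal is sound---granting, as the paper also implicitly grants, that the carried-over status attaches to rank positions rather than to team identities.

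The genuine gap is in your pivotality step. Your chain $\mathcal{R}(\bar{V},x) \geq \mathcal{R}(\bar{V},w) > \mathcal{R}(V,x)$ needs $\mathcal{R}(\bar{V},w) > \mathcal{R}(V,x)$, but what your calibration arranges is $\mathcal{R}(V,w) > \mathcal{R}(V,x)$, the allocation of $w$ under the \emph{old} profile; these are not interchangeable, since $\mathcal{R}$ may depend on the entire profile. Indeed, under the intended real-world rule ``the top $m$ of the repechage group advance,'' overtaking $w$ is precisely what pushes $w$ below the cut, so $\mathcal{R}(\bar{V},w)=0$ and your chain breaks at its middle link even in the benign case (the conclusion is then true, but only via the positional argument you explicitly declined to use: under $\bar{V}$, team $x$ occupies the position $w$ held under $V$, and the cut sits at a fixed position of $Q$). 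Worse, the general claim you set out to prove---conversion of a repechage-point gain into an allocation gain for an \emph{arbitrary} monotone, non-constant $\mathcal{R}$---is false, so no repair within your framework is possible. Consider one group of six with matches against the sixth-placed team discarded, all other groups of five with no matches discarded (impartiality still holds, as every repechage team has eight counted matches), and the rule: group winners get $2$, the flexible group's runner-up always gets $1$, every other repechage team gets $2$ if ranked above it in $Q$ and $0$ otherwise, all remaining teams get $0$. This satisfies Definition~\ref{Def311} and the non-constancy of Remark~\ref{Rem32}, and all hypotheses of Theorem~\ref{Theo32} hold, yet no team can ever strictly gain by conceding: the only team able to raise its repechage score has its allocation pinned at $1$. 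So the theorem is really only valid under an unstated anonymity/positional assumption on $\mathcal{R}$; the paper hides this in its one-sentence assertion that ``the results in other groups can be chosen'' suitably, while your attempt makes the assumption's necessity visible but does not (and cannot) dispense with it.
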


The requirements of Theorem~\ref{Theo32} are outlined in Figure~\ref{Fig1}, too, as this result runs parallel with Theorem~\ref{Theo31}.

No formal proof is given for Theorem~\ref{Theo32}, but an example with a group of four teams is provided, where the second-placed team can manipulate.


\begin{table}[ht!]
\centering
\caption[The group of Example~\ref{Examp32}]{The group of Example~\ref{Examp32} \\ \vspace{0.25cm}
\footnotesize{The runner-up is transferred to the repechage group. \\
The home team is in the row, the result of the match is given from its point of view.}}
\label{Table4}

\begin{subtable}{0.44\textwidth}
\centering
\caption{A situation susceptible to manipulation \\ if the matches played against the third-placed team count (Case~\ref{Case2}), or the matches played against the fourth-placed team are discarded (Case~\ref{Case6}) in the repechage group}
\label{Table4a}
\rowcolors{1}{}{gray!20}
    \begin{tabularx}{\textwidth}{C cccc} \toprule
    Team  & A     & B     & C     & D \\ \hline
    A     & ---   & draw  & win   & win \\
    B     & win   & ---   & win   & draw \\
    C     & loss  & loss  & ---   & win \\
    D     & loss  & win   & draw  & --- \\ \bottomrule
    \end{tabularx}
\end{subtable}
\hspace{0.1\textwidth}
\begin{subtable}{0.44\textwidth}
\centering
\caption{A situation susceptible to manipulation \\ if the matches played against the fourth-placed team count (Case~\ref{Case3}), or the matches played against the third-placed team are discarded (Case~\ref{Case5}) in the repechage group}
\label{Table4b}
\rowcolors{1}{}{gray!20}
    \begin{tabularx}{\textwidth}{C cccc} \toprule
    Team  & A     & B     & C     & D \\ \hline
    A     & ---   & win   & win   & win \\
    B     & win   & ---   & win   & win \\
    C     & draw  & loss  & ---   & win \\
    D     & loss  & win   & draw  & --- \\ \bottomrule
    \end{tabularx}
\end{subtable}
\end{table}

\begin{example} \label{Examp32}
Let $X^i = \{ A,B,C,D \}$.
Consider the monotonic group-based qualification system $(\mathcal{T}, \mathcal{G}, \mathcal{R})$ where $x \in X^i \cap \mathcal{G}_1(V)$ if $\left| \left\{ y \in X^i: y \succ_{S(v^i)} x \right\} \right| = 1$, therefore the runner-up is transferred to the repechage group.
Since there is a difference in the allocation of teams in the repechage group (the first requirement of Theorem~\ref{Theo32} holds), the results in other groups can be chosen such that the second-placed team can manipulate if its number of points considered in the repechage group increases.

Let $x \in X^i \cap \mathcal{G}_1(V)$.
The following cases are possible as the second condition of Theorem~\ref{Theo32} holds:
\begin{enumerate}[label=\emph{\Roman*}., ref=\emph{\Roman*}]
\item \label{Case1}
$\mathcal{G}_2(V,x) = \left\{ y \in X^i: \left| \left\{ z \in X^i: z \succ_{S(v^i)} y \right\} \right| = 0 \right\}$, that is, only the matches played against the top team count in the repechage group. \\
Condition~\ref{Con3} of Theorem~\ref{Theo31} provides strategy-proofness. \\
Note that
\[
\left( \mathcal{G}_2(V,x) \cap \left\{ y \in X^i: y \succ_{S(v^i)} x \right\} \right) = \left\{ y \in X^i: y \succ_{S(v^i)} x \right\} \text{, and}
\]
\[
\left( \mathcal{G}_2(V,x) \cap \left\{ y \in X^i: x \succ_{S(v^i)} y \right\} \right) = \emptyset,
\]
thus the third requirement of Theorem~\ref{Theo32} does not hold.
\item \label{Case2}
$\mathcal{G}_2(V,x) = \left\{ y \in X^i: \left| \left\{ z \in X^i: z \succ_{S(v^i)} y \right\} \right| = 2 \right\}$, that is, only the matches played against the third-placed team count in the repechage group. \\
Consider the results $v^i$ given in Table~\ref{Table4a}. The number of points are $s_{v^i}(A) = 4 \alpha + \beta + \gamma$, $s_{v^i}(B) = 3 \alpha + 2 \beta + \gamma$, $s_{v^i}(C) = \alpha + \beta + 4 \gamma$, and $s_{v^i}(D) = \alpha + 2 \beta + 3 \gamma$, hence $A \succ_{S(v^i)} B \succ_{S(v^i)} D \succ_{S(v^i)} C$ due to the monotonicity of the group ranking $S$. Furthermore, team $B$ has $\beta + \gamma$ points in the repechage group. \\
However, if $\bar{v}^i = v^i$ except for $\bar{v}^i(B,C) = \{ \text{loss} \}$ instead of $v^i(B,C) = \{ \text{win} \}$, then $s_{\bar{v}^i}(A) = s_{v^i}(A) = 4 \alpha + \beta + \gamma$, $s_{\bar{v}^i}(B) = 2 \alpha + 2 \beta + 2 \gamma$, $s_{\bar{v}^i}(C) = 2 \alpha + \beta + 3 \gamma$, and $s_{\bar{v}^i}(D) = s_{v^i}(D) = \alpha + 2 \beta + 3 \gamma$, hence $A \succ_{S(\bar{v}^i)} B \succ_{S(\bar{v}^i)} C \succ_{S(\bar{v}^i)} D$ due to the monotonicity of the group ranking $S$. Now team $B$ has $\alpha  + \gamma > \beta + \gamma$ points in the repechage group, therefore it can manipulate. \\
Note that
\[
\emptyset \neq \left( \mathcal{G}_2(V,x) \cap \left\{ y \in X^i: x \succ_{S(v^i)} y \right\} \right) \subset \left\{ y \in X^i: x \succ_{S(v^i)} y \right\},
\]
thus the third requirement of Theorem~\ref{Theo32} holds.
\item \label{Case3}
$\mathcal{G}_2(V,x) = \left\{ y \in X^i: \left| \left\{ z \in X^i: z \succ_{S(v^i)} y \right\} \right| = 3 \right\}$, that is, only the matches played against the fourth-placed team count in the repechage group. \\
Consider the results $v^i$ given in Table~\ref{Table4b}. The number of points are $s_{v^i}(A) = 4 \alpha + \beta + \gamma$, $s_{v^i}(B) = 4 \alpha + 2 \gamma$, $s_{v^i}(C) = \alpha + 2 \beta + 3 \gamma$, and $s_{v^i}(D) = \alpha + \beta + 4 \gamma$, hence $A \succ_{S(v^i)} B \succ_{S(v^i)} C \succ_{S(v^i)} D$ due to the monotonicity of the group ranking $S$. Furthermore, team $B$ has $\alpha + \gamma$ points in the repechage group. \\
However, if $\bar{v}^i = v^i$ except for $\bar{v}^i(B,D) = \{ \text{loss} \}$ instead of $v^i(B,D) = \{ \text{win} \}$, then $s_{\bar{v}^i}(A) = s_{v^i}(A) = 4 \alpha + \beta + \gamma$, $s_{\bar{v}^i}(B) = 3 \alpha + 3 \gamma$, $s_{\bar{v}^i}(C) = s_{v^i}(C) = \alpha + 2 \beta + 3 \gamma$, and $s_{\bar{v}^i}(D) = 2 \alpha + \beta + 3 \gamma$, hence $A \succ_{S(\bar{v}^i)} B \succ_{S(\bar{v}^i)} D \succ_{S(\bar{v}^i)} C$ due to the monotonicity of the group ranking $S$. Now team $B$ has $2 \alpha > \alpha + \gamma$ points in the repechage group, therefore it can manipulate. \\
Note that
\[
\emptyset \neq \left( \mathcal{G}_2(V,x) \cap \left\{ y \in X^i: x \succ_{S(v^i)} y \right\} \right) \subset \left\{ y \in X^i: x \succ_{S(v^i)} y \right\},
\]
thus the third requirement of Theorem~\ref{Theo32} holds.
\item \label{Case4}
$\mathcal{G}_2(V,x) = X^i \setminus \{ x \} \setminus \left\{ y \in X^i: \left| \left\{ z \in X^i: z \succ_{S(v^i)} y \right\} \right| = 0 \right\}$, that is, the matches played against the top team are discarded in the repechage group. \\
Condition~\ref{Con3} of Theorem~\ref{Theo31} provides strategy-proofness. \\
Note that
\[
\left( \mathcal{G}_2(V,x) \cap \left\{ y \in X^i: y \succ_{S(v^i)} x \right\} \right) = \emptyset \text{, and}
\]
\[
\left( \mathcal{G}_2(V,x) \cap \left\{ y \in X^i: x \succ_{S(v^i)} y \right\} \right) = \left\{ y \in X^i: x \succ_{S(v^i)} y \right\},
\]
thus the third requirement of Theorem~\ref{Theo32} does not hold.
\item \label{Case5}
$\mathcal{G}_2(V,x) = X^i \setminus \{ x \} \setminus \left\{ y \in X^i: \left| \left\{ z \in X^i: z \succ_{S(v^i)} y \right\} \right| = 2 \right\}$, that is, the matches played against the third-placed team are discarded in the repechage group. \\
Consider the results $v^i$ given in Table~\ref{Table4b} and the alternative results $\bar{v}^i$ from Case~\ref{Case3}.
The analysis of Case~\ref{Case3} remains valid, and team $B$ has $2 \alpha + 2 \gamma$ points under $v^i$ and $3 \alpha + \gamma$ points under $\bar{v}^i$ in the repechage group, respectively, therefore it can manipulate. \\
Note that
\[
\emptyset \neq \left( \mathcal{G}_2(V,x) \cap \left\{ y \in X^i: x \succ_{S(v^i)} y \right\} \right) \subset \left\{ y \in X^i: x \succ_{S(v^i)} y \right\},
\]
thus the third requirement of Theorem~\ref{Theo32} holds.
\item \label{Case6}
$\mathcal{G}_2(V,x) = X^i \setminus \{ x \} \setminus \left\{ y \in X^i: \left| \left\{ z \in X^i: z \succ_{S(v^i)} y \right\} \right| = 3 \right\}$, that is, the matches played against the fourth-placed team are discarded in the repechage group. \\
Consider the results $v^i$ given in Table~\ref{Table4a} and the alternative results $\bar{v}^i$ from Case~\ref{Case2}.
The analysis of Case~\ref{Case2} remains valid, and team $B$ has $\alpha + 2 \beta + \gamma$ points under $v^i$ and $2 \alpha + \beta + \gamma$ points under $\bar{v}^i$ in the repechage group, respectively, therefore it can manipulate. \\
Note that
\[
\emptyset \neq \left( \mathcal{G}_2(V,x) \cap \left\{ y \in X^i: x \succ_{S(v^i)} y \right\} \right) \subset \left\{ y \in X^i: x \succ_{S(v^i)} y \right\},
\]
thus the third requirement of Theorem~\ref{Theo32} holds.
\item \label{Case7}
$\mathcal{G}_2(V,x) = X^i \setminus \{ x \}$, that is, no matches are discarded in the repechage group. \\
Condition~\ref{Con3} of Theorem~\ref{Theo31} provides strategy-proofness. \\
Note that
\[
\left( \mathcal{G}_2(V,x) \cap \left\{ y \in X^i \setminus \mathcal{G}_1(V): y \succ_{S(v^i)} x \right\} \right) = \left\{ y \in X^i \setminus \mathcal{G}_1(V): y \succ_{S(v^i)} x \right\} \text{, and}
\]
\[
\left( \mathcal{G}_2(V,x) \cap \left\{ y \in X^i \setminus \mathcal{G}_1(V): x \succ_{S(v^i)} y \right\} \right) = \left\{ y \in X^i \setminus \mathcal{G}_1(V): x \succ_{S(v^i)} y \right\},
\]
thus the third requirement of Theorem~\ref{Theo32} does not hold.
\end{enumerate}
\end{example}

Strategy-proofness is guaranteed by monotonicity for a group of two teams.
Manipulation in a group of three teams can be presented after considering other tie-breaking rules but this is left to the reader.

Nevertheless, this remains an uninteresting case in practice since discarding the matches against an opponent if there are only two of them can hardly be justified.
Although, for instance, the \href{https://en.wikipedia.org/wiki/2018\%E2\%80\%9319_UEFA_Nations_League_C}{2018-19 UEFA Nations League C} is a tournament containing a group of three teams and a repechage group, some matches are ignored only in groups of four teams for the ranking of third-placed teams in the repechage group.

It is clear that the number of teams in the group, as well as the position of the team transferred to the repechage group, can be modified in Example~\ref{Examp32} without changing the essence of the proof.
Furthermore, if other tie-breaking rules are introduced in Definitions~\ref{Def303} and \ref{Def308}, then $\alpha = \beta$ (a win and a draw have the same worth) or $\beta = \gamma$ (a draw and a loss have the same worth) can be allowed, although this is probably unimportant for most sports.

Based on Figure~\ref{Fig1}, a ``gap'' can be identified between Theorems~\ref{Theo31} and \ref{Theo32} when the set of matches carried over to the repechage group depends on the results, but not only on the group ranking. While such a group-based qualification system is rather theoretical, it reveals the richness of the model contrary to its seemingly restrictive conditions.

\begin{proposition} \label{Prop32}
Let $(\mathcal{T}, \mathcal{G}, \mathcal{R})$ be a monotonic group-based qualification system.
Consider a round-robin group $\left( X^i,S \right)$, $1 \leq i \leq k$, where the following conditions hold:
\begin{itemize}
\item
a team is transferred to the repechage group:
$\left| X^i \cap \mathcal{G}_1(V) \right| > 0$;
\item
there exists a team $x \in X^i \cap \mathcal{G}_1(V)$ such that its set of matches considered in the repechage group depends on the set of group results $V$:
$\mathcal{G}_2(V,x) \neq \mathcal{G}_2(\sigma(V),x)$ for at least one permutation of group results $\sigma$ that satisfies $x \in \left( X^i \cap \mathcal{G}_1(V) \cap \mathcal{G}_1(\sigma(V)) \right)$;
and
\item
there exists a team $x \in X^i \cap \mathcal{G}_1(V)$ such that its set of matches considered in the repechage group does not depend only on the group ranking:
there exist sets of group results $V, \bar{V}$ with $y \succ_{S(v^i)} z$ implying $y \succ_{S(\bar{v}^i)} z$ for all teams $y,z \in X^i$ but $\mathcal{G}_2(V,x) \neq \mathcal{G}_2(\bar{V},x)$.
\end{itemize}
Then it is uncertain whether this group-based qualification system is strategy-proof or not.
\end{proposition}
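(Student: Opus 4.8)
The word \emph{uncertain} here should be read as: the three structural hypotheses are jointly consistent both with strategy-proofness and with its failure. So the plan is to prove the proposition by exhibiting two monotonic group-based qualification systems, each satisfying all three listed conditions on the group $\left( X^i, S \right)$, one of which is strategy-proof and one of which is manipulable. Verifying that each candidate meets the hypotheses is routine once its repechage function is written down; the real content lies in the two opposite verdicts on strategy-proofness.

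For the manipulable witness, I would recycle the group and the result pair of Example~\ref{Examp32}. Keep $X^i = \{ A,B,C,D \}$ with the runner-up relegated to the repechage group, and reuse the results $v^i, \bar{v}^i$ of Case~\ref{Case2} (or Case~\ref{Case3}). The only modification is to replace the purely rank-based repechage rule by one that also reads off an auxiliary feature of the detailed scores---for instance, count the matches against the third-placed team when some fixed, ranking-insensitive statistic of $v^i$ takes one value and the matches against the fourth-placed team otherwise---arranged so that on both $v^i$ and $\bar{v}^i$ the rule selects the ``matches against the third-placed team'' configuration of Case~\ref{Case2}. The manipulation computed in Example~\ref{Examp32} is then untouched, so the system is manipulable, while the auxiliary dependence is chosen to violate permutation-invariance (the second hypothesis) and rank-only dependence (the third hypothesis).

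For the strategy-proof witness, I would again take a four-team group with the runner-up $x$ relegated, but let the repechage match set switch---as a genuine function of the detailed scores---between two configurations that are each individually all-or-nothing against higher- and against lower-ranked opponents, so that each falls under Condition~\ref{Con3} of Theorem~\ref{Theo31}. The switch would be triggered by the exact scores of matches not played by $x$ (those among the other three teams), so that $x$ cannot alter the trigger directly, and the two configurations would be chosen to count the same number of $x$'s matches with $x$ having a uniform outcome on them, so that flipping the configuration never raises $x$'s repechage point total. Then no concession by $x$ can strictly improve its allocation, and strategy-proofness follows.

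The main obstacle is the strategy-proof witness: the second and third hypotheses force the match set to depend on the fine detail of the group's scores and not merely on its ranking, yet the rule must resist the two indirect channels of manipulation---$x$ flipping the configuration to a more favourable one, and $x$'s own concession changing the ranking and hence the identities feeding the trigger. The crux of the verification is therefore to confine the trigger to data outside $x$'s control and to neutralise the payoff of any configuration flip, after which the monotonicity of the group ranking $S$ and of the repechage function closes the argument exactly as in the proof of Theorem~\ref{Theo31}\ref{Con3}.
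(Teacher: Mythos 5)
Your reading of the proposition is correct, and your overall plan --- two monotonic systems satisfying all three hypotheses, one manipulable and one strategy-proof --- is exactly the paper's strategy (Example~\ref{Examp34} and Example~\ref{Examp33}, respectively). Your manipulable witness is essentially the paper's: Example~\ref{Examp34} takes the rank-based rule of Case~\ref{Case6}, adds a trigger (``some team wins all of its matches'') that is inactive on both result profiles of Example~\ref{Examp32}, and lets the manipulation carry over untouched; your ``auxiliary ranking-insensitive statistic'' plays the same role, and since the two configurations you alternate between (matches against the third-placed versus the fourth-placed team) have equal cardinality, impartiality is preserved. That half is fine.

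The gap is in your strategy-proof witness. You require a four-team group with the runner-up relegated and \emph{two distinct} configurations, each all-or-nothing with respect to higher- and lower-ranked opponents, ``chosen to count the same number of $x$'s matches''. No such pair exists: for the runner-up of a four-team group the only all-or-nothing sets are $\{ \text{1st} \}$, $\{ \text{3rd, 4th} \}$ and $\{ \text{1st, 3rd, 4th} \}$, containing $2$, $4$ and $6$ matches respectively. The equal-cardinality requirement is not cosmetic, either: if your per-group trigger places one group in a $2$-match configuration while another group sits in a $4$-match configuration, condition~\ref{Impartiality_con4} of Definition~\ref{Def306} fails, so the repechage function is not impartial, the system is not monotonic in the sense of Definition~\ref{Def311}, and it is therefore not a valid witness for the proposition at all. (Your further stipulation that $x$ have ``a uniform outcome'' on the counted matches is also not something a rule can enforce --- it is a property of particular results, whereas strategy-proofness quantifies over all result profiles.) The construction can be repaired in two ways. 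Either move to a five-team group with the third-placed team relegated, where $\{ \text{1st, 2nd} \}$ and $\{ \text{4th, 5th} \}$ are both all-or-nothing with four matches each, and your idea of a trigger confined to matches outside $x$'s control then goes through. Or do what the paper does in Example~\ref{Examp33}: switch between the equal-cardinality sets $\{ \text{1st} \}$ and $\{ \text{4th} \}$, the latter active only when the fourth-placed team has lost all of its matches; that exceptional configuration violates the all-or-nothing shape of Condition~\ref{Con3} of Theorem~\ref{Theo31}, but it is harmless because $x$ then carries the maximal $2\alpha$ points into the repechage group, while in the default regime $x$ cannot alter the identity of the top team without strictly losing points and hence dropping out of the repechage group.
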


An example is provided for both cases.

\begin{example} \label{Examp33}
It basically follows the monotonic group-based qualification system $(\mathcal{T}, \mathcal{G}, \mathcal{R})$ of Example~\ref{Examp32} with $\mathcal{G}_2(V,x) = \left\{ y \in X^i: \left| \left\{ z \in X^i: z \succ_{S(v^i)} y \right\} \right| = 0 \right\}$ (Case~\ref{Case1}), thus only the matches played against the top team count in the repechage group.
However, if there exists a team $y \in X^i$ such that $v_1^i(z,y) > v_1^i(z,y)$ and $v_2^i(z,y) < v_1^i(z,y)$ for all $z \in X^i \setminus \{ y \}$ (team $y$ loses all of its matches), then $\mathcal{G}_2(V,x) = \left\{ y \in X^i: \left| \left\{ z \in X^i: z \succ_{S(v^i)} y \right\} \right| = 3 \right\}$, hence only the matches played against the fourth-placed team count in the repechage group.
\end{example}

The monotonic group-based qualification system $(\mathcal{T}, \mathcal{G}, \mathcal{R})$ of Example~\ref{Examp33} is incentive compatible. First, if only the matches played against the top team count in the repechage group (consequently, there exists no team losing all of its matches), then $\mathcal{G}_2(V,x)$ cannot be modified by the second-placed team $x \in X^i$ through exerting a lower effort. Second, if the matches played against the fourth-placed team, which loses all of its matches, count in the repechage group, then the second-placed team $x \in X^i$ has no incentive to manipulate because it carries over the maximal number of points $2 \alpha$ to the repechage group.

\begin{example} \label{Examp34}
It basically follows the monotonic group-based qualification system $(\mathcal{T}, \mathcal{G}, \mathcal{R})$ of Example~\ref{Examp32} with $\mathcal{G}_2(V,x) = X^i \setminus \{ x \} \setminus \left\{ y \in X^i: \left| \left\{ z \in X^i: z \succ_{S(v^i)} y \right\} \right| = 3 \right\}$ (Case~\ref{Case6}), that is, the matches played against the fourth-placed team are discarded in the repechage group.
However, if there exists a team $y \in X^i$ such that $v_1^i(z,y) < v_2^i(z,y)$ and $v_2^i(z,y) > v_1^i(z,y)$ for all $z \in X^i \setminus \{ y \}$ (team $y$ wins all of its matches), then $\mathcal{G}_2(V,x) = \left\{ y \in X^i: \left| \left\{ z \in X^i: z \succ_{S(v^i)} y \right\} \right| \leq 2 \right\}$, hence the matches played against the top team are discarded in the repechage group.
\end{example}

The monotonic group-based qualification system $(\mathcal{T}, \mathcal{G}, \mathcal{R})$ of Example~\ref{Examp34} can be manipulated similarly to Case~\ref{Case6} in the proof of Theorem~\ref{Theo32} because no team has won all of its matches there.

Note that---despite the different setting considered---a similar gray zone emerges in the analysis of tournament systems consisting of multiple round-robin and knock-out tournaments with noncumulative prizes as illustrated by \citet[Example~4]{DagaevSonin2018}.

\begin{corollary} \label{Cor31}
2018 FIFA World Cup qualification tournament (UEFA) is incentive incompatible.
\end{corollary}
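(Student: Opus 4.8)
The plan is to derive the result directly from Theorem~\ref{Theo32} by verifying its three hypotheses for the groups of this tournament. Proposition~\ref{Prop31} already certifies that the 2018 FIFA World Cup qualification tournament (UEFA) is a monotonic group-based qualification system, so the framework of Theorem~\ref{Theo32} applies and it suffices to exhibit one round-robin group $(X^i, S)$ satisfying conditions a)--c). I treat Theorem~\ref{Theo32} as a black box, invoking it once its premises are checked.

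First I would dispatch the two easy conditions. Since the runner-up of each group is relegated to the repechage group, $\left| X^i \cap \mathcal{G}_1(V) \right| = 1 > 0$, giving condition a). For condition b), recall from Proposition~\ref{Prop31} that for the runner-up $x$ the set $\mathcal{G}_2(V,x)$ consists exactly of the matches played against the teams ranked first through fifth, equivalently of all matches except those against the last-placed team. The identity of the sixth-placed team is fixed by the group ranking $S(v^i)$ alone, so $\mathcal{G}_2(V,x)$ depends only on the group ranking, which is precisely condition b).

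The crux is condition c), where one must identify which of its two alternatives is in force. For the runner-up $x$, the set of higher-ranked teams $\{ y \in X^i : y \succ_{S(v^i)} x \}$ is the singleton consisting of the group winner, whose matches are always retained; hence $\mathcal{G}_2(V,x) \cap \{ y \in X^i : y \succ_{S(v^i)} x \}$ equals the whole set and the first alternative fails. The working alternative is the second one: the set of lower-ranked teams $\{ y \in X^i : x \succ_{S(v^i)} y \}$ comprises the third-, fourth-, fifth-, and sixth-placed teams, of which exactly the matches against the sixth-placed team are discarded. Therefore $\emptyset \neq \left( \mathcal{G}_2(V,x) \cap \{ y \in X^i : x \succ_{S(v^i)} y \} \right) \subset \{ y \in X^i : x \succ_{S(v^i)} y \}$ is a proper nonempty subset, so condition c) holds. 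With all three premises verified, Theorem~\ref{Theo32} yields that the system violates strategy-proofness.

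I expect the only real subtlety---rather than a genuine obstacle---to be the correct matching of the tournament to the lower-ranked alternative in condition c): the manipulable structure arises not from the higher-ranked side (the winner is always kept) but from discarding the last-placed opponent among the lower-ranked teams, the same mechanism that drives Case~\ref{Case6} of Example~\ref{Examp32}. A concrete witness is moreover already available in Example~\ref{Examp21}, where Bulgaria strictly prefers a draw to a win against Luxembourg, so the abstract conclusion can be cross-checked against an explicit manipulation scenario.
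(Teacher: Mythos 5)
Your proposal is correct and takes essentially the same approach as the paper: the paper's own proof also verifies the hypotheses of Theorem~\ref{Theo32} via Proposition~\ref{Prop31}, pinpointing exactly the same proper nonempty subset of lower-ranked opponents (matches against the sixth-placed team discarded, those against the third through fifth retained), and it also invokes the concrete Bulgaria--Luxembourg scenario of Example~\ref{Examp21}. The only difference is emphasis --- the paper presents the explicit manipulation first and the theorem application second, whereas you lead with the theorem and use the example as a cross-check.
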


\begin{proof}
The scenario presented in Example~\ref{Examp21} shows that team $\text{Bulgaria} = x \in X^1$ can manipulate because there exist sets of group results $V = \left\{ v^1, v^2, \dots ,v^9 \right\}$ and $\bar{V} = \left\{ \bar{v}^1, v^2, \dots ,v^9 \right\}$ such that $\bar{v}^1 = v^1$ except for $\bar{v}_1^1(y,x) = 1 > 0 = v_1^1(y,x)$, where team $\text{Luxembourg} = y \in X^1$ and $\mathcal{R}(V,x) = 0 < 1 = \mathcal{R}(\bar{V},x)$.

Theorem~\ref{Theo32} can also be applied due to Proposition~\ref{Prop31}.
The group-based qualification system of the 2018 FIFA World Cup qualification tournament (UEFA) is monotonic, $\left| X^i \cap \mathcal{G}_1(V) \right| = 1$ and $\mathcal{R}(V,x)$ can be $0$ or $1$ for team $x \in X^i \cap \mathcal{G}_1(V)$ in the repechage group. Furthermore, for each team $x \in X^i \cap \mathcal{G}_1(V)$, $\mathcal{G}_2(V,x)$ depends only on the group ranking, and, finally, $\emptyset \neq \left( \mathcal{G}_2(V,x) \cap \left\{ y \in X^i \setminus \mathcal{G}_1(V): x \succ_{S(v^i)} y \right\} \right) \subset \left\{ y \in X^i \setminus \mathcal{G}_1(V): x \succ_{S(v^i)} y \right\}$ under any set of group results $V = \left\{ v^1, v^2, \dots ,v^k \right\}$.
\end{proof}

Theorem~\ref{Theo32} proves the incentive incompatibility of the \href{https://en.wikipedia.org/wiki/2014_FIFA_World_Cup_qualification_(UEFA)}{2014 FIFA World Cup qualification tournament (UEFA)}, too, which has already been shown in \citet{DagaevSonin2013}.

\begin{remark} \label{Rem33}
The allocation rule $\mathcal{R}$ of the model above does not distinguish between teams advancing to the subsequent round.
This holds if the next round is seeded randomly (like in the \href{https://en.wikipedia.org/wiki/UEFA_Euro_2000_qualifying}{UEFA Euro 2000 qualifying tournament}), or on the basis of an exogenous team ranking that cannot be manipulated by exerting a lower effort in some matches (like FIFA World Rankings used in the case of 2018 FIFA World Cup qualification tournament (UEFA)). However, if, for example, the highest-ranked teams in the repechage group are placed in the first pot before the seeding (like in the \href{https://en.wikipedia.org/wiki/UEFA_Euro_2020_qualifying}{UEFA Euro 2020 qualifying tournament}), then the ranking in the repechage group count, and a successful tanking can reduce the expected strength of the opponents in the next stage.
\end{remark}

\section{Implications} \label{Sec4}

It is known from \citet{DagaevSonin2013} and Corollary~\ref{Cor31}, respectively, that the 2014 and 2018 FIFA World Cup qualification tournaments (UEFA) do not satisfy strategy-proofness.
Further qualification tournaments to the recent FIFA World Cups in the European Zone (World Cup (UEFA)) and UEFA European Championships (UEFA Euro) are analysed from this point of view in Table~\ref{Table5}. 

\begin{sidewaystable}
\centering
\caption[A summary of qualification tournaments to the FIFA World Cups in the European Zone and UEFA European Championships since 1990]{A summary of qualification tournaments to the FIFA World Cups (UEFA) and UEFA European Championships since 1990 \\ \vspace{0.25cm}
\footnotesize{Groups = Number of groups; Teams = Number of teams; Gr / T = Number of groups / Number of teams in the group; Slots = Number of teams qualified; QD = Teams qualified directly; PO = Teams advanced to the play-offs; Discarded matches = Group matches that are discarded in the repechage group; SP = Strategy-proofness (\textcolor{PineGreen}{\ding{52}}: holds; \textcolor{BrickRed}{\ding{55}}: is violated})}
\label{Table5}
\rowcolors{1}{gray!20}{}
\begin{threeparttable}
\begin{tabularx}{1\linewidth}{Lcc ccc cc c} \toprule \hiderowcolors
    Qualification tournament & Groups & Teams & Gr / T & Slots & QD    & PO   & Discarded matches & SP \\ \midrule \showrowcolors
    \href{https://en.wikipedia.org/wiki/1990_FIFA_World_Cup_qualification_(UEFA)}{1990 World Cup (UEFA)} & 7     & 32    & 4/5; 3/4 & 13    & 1st; six 2nd\tnote{1} & ---   & ---   & \textcolor{PineGreen}{\ding{52}} \\
    \href{https://en.wikipedia.org/wiki/1994_FIFA_World_Cup_qualification_(UEFA)}{1994 World Cup (UEFA)} & 6     & 37    & 1/7; 4/6; 1/5\tnote{2} & 12    & 1st; 2nd & ---   & ---   & \textcolor{PineGreen}{\ding{52}} \\
    \href{https://en.wikipedia.org/wiki/1998_FIFA_World_Cup_qualification_(UEFA)}{1998 World Cup (UEFA)} & 9     & 49    & 4/6; 5/5 & 14    & 1st; best 2nd & eight worst 2nd & against 5th and 6th & \textcolor{BrickRed}{\ding{55}} \\
    \href{https://en.wikipedia.org/wiki/2002_FIFA_World_Cup_qualification_(UEFA)}{2002 World Cup (UEFA)} & 9     & 50    & 5/6; 4/5 & 13.5\tnote{3}  & 1st   & all 2nd\tnote{4} & ---   & \textcolor{PineGreen}{\ding{52}} \\
    \href{https://en.wikipedia.org/wiki/2006_FIFA_World_Cup_qualification_(UEFA)}{2006 World Cup (UEFA)} & 8     & 51    & 3/7; 5/6 & 13    & 1st; two best 2nd & six worst 2nd & against 7th & \textcolor{BrickRed}{\ding{55}} \\
    \href{https://en.wikipedia.org/wiki/2010_FIFA_World_Cup_qualification_(UEFA)}{2010 World Cup (UEFA)} & 9     & 53    & 8/6; 1/5 & 13    & 1st   & eight best 2nd & against 6th & \textcolor{BrickRed}{\ding{55}} \\
    \href{https://en.wikipedia.org/wiki/2014_FIFA_World_Cup_qualification_(UEFA)}{2014 World Cup (UEFA)} & 9     & 53    & 8/6; 1/5 & 13    & 1st   & eight best 2nd & against 6th & \textcolor{BrickRed}{\ding{55}} \\
    \href{https://en.wikipedia.org/wiki/2018_FIFA_World_Cup_qualification_(UEFA)}{2018 World Cup (UEFA)} & 9     & 54    & 9/6   & 13    & 1st   & eight best 2nd & against 6th & \textcolor{BrickRed}{\ding{55}} \\ \hline
    \href{https://en.wikipedia.org/wiki/UEFA_Euro_1992_qualifying}{1992 UEFA Euro} & 7     & 33    & 5/5; 2/4 & 7     & 1st   & ---   & ---   & \textcolor{PineGreen}{\ding{52}} \\ 
    \href{https://en.wikipedia.org/wiki/UEFA_Euro_1996_qualifying}{1996 UEFA Euro} & 8     & 47    & 7/6; 1/5 & 15    & 1st; six best 2nd & two worst 2nd & against 5th and 6th & \textcolor{BrickRed}{\ding{55}} \\
    \href{https://en.wikipedia.org/wiki/UEFA_Euro_2000_qualifying}{2000 UEFA Euro} & 9     & 49    & 4/6; 5/5 & 14    & 1st; best 2nd & eight worst 2nd & against 5th and 6th & \textcolor{BrickRed}{\ding{55}} \\
    \href{https://en.wikipedia.org/wiki/UEFA_Euro_2004_qualifying}{2004 UEFA Euro} & 10    & 50    & 10/5  & 15    & 1st   & all 2nd & ---   & \textcolor{PineGreen}{\ding{52}} \\
    \href{https://en.wikipedia.org/wiki/UEFA_Euro_2008_qualifying}{2008 UEFA Euro} & 7     & 50    & 1/8; 6/7 & 14    & 1st; 2nd & ---   & ---   & \textcolor{PineGreen}{\ding{52}} \\
    \href{https://en.wikipedia.org/wiki/UEFA_Euro_2012_qualifying}{2012 UEFA Euro} & 9     & 51    & 6/6; 3/5 & 14    & 1st; best 2nd & eight worst 2nd & against 6th & \textcolor{BrickRed}{\ding{55}} \\
    \href{https://en.wikipedia.org/wiki/UEFA_Euro_2016_qualifying}{2016 UEFA Euro} & 9     & 53    & 8/6; 1/5 & 23    & 1st; 2nd; best 3rd & eight worst 3rd & against 6th & \textcolor{BrickRed}{\ding{55}} \\ \bottomrule
\end{tabularx}

\begin{tablenotes} \footnotesize
	\item[1] The second-placed teams in the four groups containing five teams directly qualified together with the two best second-placed teams in the three groups containing only four teams.
	\item[2] Group 5, originally containing six teams, ended up with five after Yugoslavia was suspended.
	\item[3] One team was advanced to an intercontinental play-off.
	\item[4] The runner-up of Group 2 was drawn randomly for an intercontinental play-off.
\end{tablenotes}
\end{threeparttable}
\end{sidewaystable}

The \href{https://en.wikipedia.org/wiki/UEFA_European_Championship}{UEFA European Championship} has been held every four years since 1960. The \href{https://en.wikipedia.org/wiki/UEFA_European_Championship_qualifying}{qualification systems} for the tournaments between 1960 and 1992 can be described without a repechage group, thus they were strategy-proof due to condition~\ref{Con1} of Theorem~\ref{Theo31}.
The same result provides the incentive compatibility of the \href{https://en.wikipedia.org/wiki/UEFA_Euro_2004_qualifying}{2004} and \href{https://en.wikipedia.org/wiki/UEFA_Euro_2008_qualifying}{2008 qualifying tournament}, where teams from different groups had not to be compared.

The \href{https://en.wikipedia.org/wiki/UEFA_Euro_2020_qualifying}{UEFA Euro 2020 qualifying tournament} is linked with the \href{https://en.wikipedia.org/wiki/2018\%E2\%80\%9319_UEFA_Nations_League}{2018-19 UEFA Nations League} and gives teams a secondary route to qualify for the final tournament through the \href{https://en.wikipedia.org/wiki/UEFA_Euro_2020_qualifying_play-offs}{qualifying play-offs}. This format is not covered by our theoretical model. However, it is also incentive incompatible according to \citet[Proposition~2]{DagaevSonin2018} since the qualifying system essentially consists of two parallel round-robin tournaments, therefore a team might be strictly better off by creating a vacant slot in the qualifying play-offs.\footnote{~There is a debate around tanking in the \href{https://en.wikipedia.org/wiki/UEFA_Nations_League}{UEFA Nations League}, see at \url{https://www.reddit.com/r/soccer/comments/75vrcm/tanking_in_the_uefa_nations_league_wont_benefit/} and \url{http://www.football-rankings.info/2017/07/uefa-nations-league-losing-could.html}. However, it focuses on the observation that dropping to a lower-ranked league could substantially improve the chances of qualifying through the play-offs, an unfair gain only in terms of expected probability \citep{Csato2020b}.}

The first incentive incompatible FIFA World Cup qualifier in the European zone was the \href{https://en.wikipedia.org/wiki/1998_FIFA_World_Cup_qualification_(UEFA)}{1998 FIFA World Cup qualification tournament (UEFA)}.
The \href{https://en.wikipedia.org/wiki/2002_FIFA_World_Cup_qualification_(UEFA)}{2002 FIFA World Cup qualification tournament (UEFA)} again satisfied strategy-proofness due to the lack of a repechage group (see condition \ref{Con1} of Theorem~\ref{Theo31}).

Even though the \href{https://en.wikipedia.org/wiki/1994_FIFA_World_Cup_qualification_(UEFA)}{1990} and  \href{https://en.wikipedia.org/wiki/1994_FIFA_World_Cup_qualification_(UEFA)}{1994 FIFA World Cup qualification  tournaments (UEFA)} were incentive compatible, their fairness remains questionable: the second-placed teams qualified automatically only from certain groups in the first case, and all second-placed teams qualified but group sizes varied in the second case (some qualification tournaments before the 1990 event suffered from the same problem).
It is clear that incentive compatibility is a narrower concept than fairness, see \citet{Guyon2018a} for an examination of the latter through the example of the \href{https://en.wikipedia.org/wiki/UEFA_Euro_2016}{2016 UEFA European Championship}. Hence, one can even say that the organisers sacrificed fairness for the sake of strategy-proofness.

\begin{proposition} \label{Prop41}
Qualification tournaments for the 1996, 2000, 2012 and 2016 UEFA European Championships as well as for the 1998, 2006, 2010, 2014 and 2018 FIFA World Cups in the European Zone were incentive incompatible.
\end{proposition}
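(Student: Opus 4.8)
The plan is to handle all nine tournaments uniformly: each is first recognised as a \emph{monotonic} group-based qualification system in the sense of Definition~\ref{Def311}, exactly as the 2018 series was in Proposition~\ref{Prop31}, and then Theorem~\ref{Theo32} is invoked to certify the failure of strategy-proofness. In every listed series the primary tie-breaker inside a group is the number of points, with a win worth strictly more than a draw and a draw strictly more than a loss (the $\alpha > \beta > \gamma$ of Remark~\ref{Rem31}), so the common group ranking $S$ is monotonic by Definition~\ref{Def303}; the ranking used to compare the relegated teams (runners-up for the World Cup qualifiers and the 1996, 2000 and 2012 Euros, third-placed teams for the 2016 Euro) is again points-based, so the repechage function is monotonic by Definition~\ref{Def308}. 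Since some relegated teams advance while others are eliminated---and for the 2016 Euro the single best third-placed team qualifies directly while the remaining eight play off---the genuine-difference-in-allocation hypothesis attached to condition~(a) of Theorem~\ref{Theo32} is met.

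First I would verify impartiality (Definition~\ref{Def306}); the only nontrivial point is condition~\ref{Impartiality_con4}, that the number of matches carried into the repechage group is the same for every relegated team even across groups of different sizes. This holds because the discarded matches are precisely those against a fixed block of lowest positions that are simply absent from the smaller groups. In the 2010, 2014 and 2012 series the match against the sixth-placed team is dropped, so a relegated team from a six-team group keeps four matches while one from a five-team group (which has no sixth place) also keeps four; in the 1996, 1998 and 2000 series the two matches against the fifth and sixth count equalises at three; in 2006 the single match against the seventh-placed team is dropped, equalising at five. Conditions \ref{Impartiality_con1}--\ref{Impartiality_con3} then follow immediately from the purely position-based definition of the discarded set.

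The decisive step is condition~(c) of Theorem~\ref{Theo32}, and here I would restrict attention to the largest groups, in which the discarded positions actually exist. In every case the discarded opponents (fifth, sixth, or seventh) sit strictly below the relegated team, so all discarding falls on the lower-ranked side; yet the relegated team still retains matches against other lower-ranked opponents (the third and fourth for a runner-up, the fourth and fifth for a third-placed team), so its set of retained lower-ranked matches is a \emph{proper nonempty} subset of all such matches---exactly the second alternative of condition~(c). Condition~(b) is immediate, since the discarded set is specified through group positions and hence depends only on $S$. With (a)--(c) in force for at least one group, Theorem~\ref{Theo32} delivers incentive incompatibility; concretely, each tournament reproduces, at the scale of its group size, the manipulation archetype of Case~\ref{Case6} of Example~\ref{Examp32} (single discarded position), or its two-position analogue in the spirit of Case~\ref{Case5} when the fifth and sixth are both dropped, where a relegated team lowers its effort against a retained lower-ranked opponent so as to demote a stronger rival into the discarded slot and thereby raise its own repechage points.

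I expect the main obstacle to be bookkeeping rather than conceptual: one must match each tournament's published tie-breaking regulation and its ``discarded matches'' entry (collected in Table~\ref{Table5}) to the abstract repechage function, and confirm that the chosen group genuinely admits a monotone configuration of results realising the manipulation, just as Table~\ref{Table4} does for the four-team prototype. The varying group sizes and the two different relegation layers (runners-up versus third-placed teams) make the impartiality check the most delicate part, but once the uniform match count is secured the remaining conditions are read off directly from the position-based discarding rule, and strategy-proofness fails in each of the nine cases.
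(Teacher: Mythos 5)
Your proposal is correct and follows exactly the paper's route: the paper's own proof is the one-line observation that the result ``immediately follows from Theorem~\ref{Theo32} (see Table~\ref{Table5})'', and what you have written is simply that same application of Theorem~\ref{Theo32} with the hypotheses (monotonicity, impartiality, position-based discarding, and the proper nonempty subset condition) verified explicitly against the data of Table~\ref{Table5}. Your detailed checks---in particular the uniform opponent counts across unequal group sizes and the identification of the retained lower-ranked opponents as a proper nonempty subset---are all accurate, so there is no gap.
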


\begin{proof}
It immediately follows from Theorem \ref{Theo32} (see Table~\ref{Table5}).
\end{proof}

The list is far from exhaustive.
For example, the elite round of the \href{https://en.wikipedia.org/wiki/2016_UEFA_European_Under-17_Championship_qualification}{2016 UEFA European Under-17 Championship qualification tournament} and the \href{https://en.wikipedia.org/wiki/2017_UEFA_European_Under-21_Championship}{2017 UEFA European Under-21 Championship} should be paid special attention because they may inspire academic research in the future.

It is also worth noting that all incentive incompatible qualifiers of Table~\ref{Table5} discard some matches in the repechage group, which is a necessary---but not sufficient---condition of violating strategy-proofness as Figure~\ref{Fig1} shows.

Proposition~\ref{Prop41} carries a disconcerting message for the administrators of FIFA and UEFA as they could be responsible for a potential scandal occurring in a recent qualification tournament. The example of Section~\ref{Sec2} would be especially disturbing because Luxembourg would have practically no incentive to interfere with the manipulation of Bulgaria and to prevent the elimination of Montenegro. Furthermore, Luxembourg might have even been interested in scoring a goal to obtain the fifth place in the group.
Fortunately, such an outcome did not materialised, and we do not know about any attempt to strategically manipulate these qualification tournaments in the way presented above.

\section{Strategy-proof mechanisms} \label{Sec5}

Nevertheless, the absence of dishonest behaviour in the history of qualification tournaments does not reduce the importance of strategy-proofness in practice, especially if it can be achieved without significant rule changes.
For instance, in the 2018 FIFA World Cup qualification tournament (UEFA), the root of the problem resides in discarding the matches played against the sixth-placed teams in the comparison of runners-up. The greatest pity about this situation is that it could have been straightforward to avoid by UEFA ditching the strange policy of ignoring some group matches since all groups had six teams following the admission of Gibraltar and Kosovo.

Yet the administrators chose not to modify the rules. According to a UEFA News \citep{UEFA2017a}, released on 10 October 2017, \emph{after} the end of group stage: ``[\dots] \emph{the exclusion of results against sixth-placed teams was retained to alleviate any possible imbalance between the qualifying groups caused by the late introductions of Gibraltar and Kosovo}''. While it is respectable to consider some \emph{mathematically unprovable} imbalances between the groups, this decision sacrificed the more clear and crucial requirement of incentive compatibility.

It seems to be necessary to suggest incentive compatible designs in order to argue against the format of recent qualification tournaments.
Denote the number of teams by $n$, and the number of groups by $k$. Let $m \equiv n \mod k$ such that $0 \leq m < k$ and $\ell = (n-m)/k \in \mathbb{Z}$.
We are looking for strategy-proof group-based qualification systems on the basis of Theorem~\ref{Theo31}.

\subsection*{\emph{Mechanism A}: eliminating the repechage group}

The optimal case is to create groups of equal size, like in the \href{https://en.wikipedia.org/wiki/UEFA_Euro_2004_qualifying}{UEFA Euro 2004 qualifying tournament} (see Table~\ref{Table5}). However, it may conflict with divisibility, this solution cannot be followed if $m > 0$.

Another possibility is that all second- or third-placed teams either qualify or advance to the next round regardless of group sizes. For example, the \href{https://en.wikipedia.org/wiki/UEFA_Euro_2008_qualifying}{UEFA Euro 2008 qualifying tournament} was strategy-proof as the top two teams in each group qualified, although a group contained more teams than the others, which may be unfair (see Table~\ref{Table5}).

Mechanism A ensures incentive compatibility due to condition~\ref{Con1} of Theorem~\ref{Theo31}.

Since the application of Mechanism A remains questionable if $m > 0$, three other policies are suggested.

\subsection*{\emph{Mechanism B}: matches to be discarded in the repechage group are independent of group results}

While social choice theory usually wants to avoid the violation of anonymity at all costs, it makes sense to consider this solution because of the seeding procedure. If teams are ranked on an external basis (such as FIFA World Rankings from a given month), there would be $k$ teams in Pot 1, $k$ teams in Pot 2, and so on, until Pot $\ell$ with $k + m$ (as in the \href{https://en.wikipedia.org/wiki/UEFA_Euro_2008_qualifying}{UEFA Euro 2008 qualifying tournament}) or Pot $\ell + 1$ with $m$ (as in the \href{https://en.wikipedia.org/wiki/2018_FIFA_World_Cup_qualification_(UEFA)}{2018 FIFA World Cup qualification tournament (UEFA)}) teams is formed. Remember that $0 < m < k$. Since the last pot is responsible for the difference of group sizes, it could be fair to discard the matches played against the team(s) from the last pot in the repechage group. \\
Nevertheless, a problem may arise when a team from the last pot is transferred to the repechage group due to its unexpectedly good performance in the qualifiers.\footnote{~We are grateful to \emph{D\'enes P\'alv\"olgyi} for spotting this issue. \\
One of the greatest surprises of this type occurred in the \href{https://en.wikipedia.org/wiki/UEFA_Euro_2016_qualifying}{2016 UEFA Euro qualifying tournament} when Greece finished as the last-placed team in Group F despite being drawn from Pot 1.}
The scenario can be immediately addressed by discarding the matches played against the team from the penultimate pot for this particular team, which does not affect strategy-proofness.

Mechanism B provides incentive compatibility due to condition~\ref{Con2} of Theorem~\ref{Theo31}.

For the 2018 FIFA World Cup qualification tournament (UEFA), this regime means fixing in advance that matches played against the teams in Pot 6 (Luxembourg, Andorra, San Marino, Georgia, Kazakhstan, Malta, Liechtenstein, as well as the lately introduced Gibraltar and Kosovo) are ignored in the comparison of the runners-up. Since only Luxembourg and Georgia obtained a better (the fifth) position in the qualification tournament, this policy does not make much difference in practice. However, while it is attractive from a theoretical point of view, ethical concerns may arise.

\subsection*{\emph{Mechanism C}: all or no matches played against higher- and lower-ranked teams are considered in the repechage group}

If group sizes vary because of $m > 0$, and the $p$th-placed teams of the groups are considered in the repechage group, then only their matches played against higher-ranked teams can count in the repechage group as the cardinality of the set of teams that are ranked lower than the $p$th-placed teams in the groups is different.
It means no problem when:
\begin{enumerate}[label=\alph*)]
\item
The qualification tournament is centred around relegation: in the \href{https://en.wikipedia.org/wiki/2018\%E2\%80\%9319_UEFA_Nations_League_C}{2018-19 UEFA Nations League C}, $15$ teams are divided into one group of three teams and three groups of four teams each such that the three fourth-placed and the worse third-placed teams are relegated to League D, at least according to the original intention of the UEFA. In the comparison of the third-placed teams, only the matches played against the first- and second-placed teams in the group count. Since there is at most one team ranked lower than the team transferred to the repechage group in each group, they cannot manipulate by changing their set of matches to be ignored.
\item
A high proportion of participating teams qualify: in the \href{https://en.wikipedia.org/wiki/2018_European_Men's_Handball_Championship_qualification\#Qualification_Phase_2}{2018 European Men's Handball Championship Qualification Phase 2}, the contestants were split into seven groups of four teams each such that the top two ranked teams from each group and the best third-placed team qualified for the final tournament. In the comparison of third-placed teams, only the matches played against the teams ranked first and second in the group was considered.
\end{enumerate}

Mechanism C provides incentive compatibility due to condition~\ref{Con3} of Theorem~\ref{Theo31}.

This policy has guaranteed the strategy-proofness of the \href{https://en.wikipedia.org/wiki/UEFA_Euro_2016}{2016 UEFA European Championship} with uniform group sizes \citep{Guyon2018a}, too, where third-placed teams were ranked on the basis of all group matches.

\subsection*{\emph{Mechanism D}: new seeding policy}

The idea behind mechanism C can be applied with a slight modification of the seeding procedure, by using a bottom-up design of the pots instead of the usual top-down approach. First, the worst $k$ teams form Pot $\ell+1$, the next $k$ teams Pot $\ell$, and so on until Pot 1 with $m$ teams is formed, where $0 < m < k$. Then $m$ groups are created by drawing a team from each pot, and $k-m$ groups are created by drawing a team from each pot except for Pot 1.
The top $p$ teams qualify from the $m$ groups of size $\ell+1$, and the top $p-1$ teams qualify from the $k-m$ groups of size $\ell$. The repechage group consists of the $(p+1)$th-placed teams from the $m$ groups of size $\ell+1$, and the $p$th-placed teams from the $k-m$ groups of size $\ell$, where they are compared on the basis of their matches played against the $\ell - p$ teams that are ranked lower in their groups. Consequently, the matches played against the already qualified teams are discarded in the repechage group, which seems to be a reasonable policy.

Mechanism D provides incentive compatibility due to condition~\ref{Con3} of Theorem~\ref{Theo31}.

\subsection*{Further solutions}

Some other designs can guarantee strategy-proofness. For example, the format of the \href{https://en.wikipedia.org/wiki/1990_FIFA_World_Cup_qualification_(UEFA)}{1990 FIFA World Cup qualification tournament (UEFA)} can be followed without discarding any matches. It means that: (1) the directly qualified runners-up or third-placed teams are chosen from the groups containing more teams, or (2) the eliminated runners-up or third-placed teams are chosen from the groups containing more teams. However, this solution seems to be dishonest as discussed in Section~\ref{Sec4}.

It is also possible to organise a preliminary round for lower-ranked teams, either a round-robin such as in the \href{https://en.wikipedia.org/wiki/2018_FIVB_Volleyball_Men\%27s_World_Championship_qualification_(CEV)}{CEV qualification tournament for the 2018 FIVB Volleyball Men's World Championship}, or a two-leg playoff, similarly to the \href{https://en.wikipedia.org/wiki/2018_FIFA_World_Cup_qualification_\%E2\%80\%93_AFC_First_Round}{2018 FIFA World Cup qualification tournament---AFC First Round}.
Note that the match calendar is less crowded for lower-ranked teams because they normally do not qualify for major tournaments.

Another solution might be reducing the number of contestants to achieve equal group sizes. For instance, the weakest teams (e.g. Gibraltar, Liechtenstein, San Marino, etc.) can be placed into a special group, where they play against each other without the possibility of direct qualification. The winner of this group may advance to a play-off with the best runner-up or third-placed team. Besides excluding manipulation, this policy has the further benefit of giving a chance for lower-ranked national teams, mainly composed of amateur players, to compete against each other and enjoy more success than scoring some lucky goals against professional sportsmen.\footnote{~This idea may have inspired the novel tournament of \href{https://en.wikipedia.org/wiki/UEFA_Nations_League}{UEFA Nations League}, especially its \href{https://en.wikipedia.org/wiki/2020\%E2\%80\%9321_UEFA_Nations_League}{2020/21 season}, where 48 teams play in three leagues containing four groups of four teams each, while the remaining seven UEFA national teams form the lowest-ranked League D.
A similar mechanism is used in the \href{https://en.wikipedia.org/wiki/2022_FIFA_World_Cup_qualification_(CONCACAF)}{North, Central American and Caribbean section of the 2022 FIFA World Cup qualification}: the fourth-placed team in the top-seeded \href{https://en.wikipedia.org/wiki/Hexagonal_(CONCACAF)}{Hexagonal} group of six teams plays against the winner of the tournament organised for the remaining 29 teams to obtain the inter-confederation play-off slot of the Confederation of North, Central American and Caribbean Association Football (CONCACAF).}

\subsection*{Assessment}

To summarise, it is clear that mechanism A is perfect in the case of $m=0$ but its fairness remains questionable if $n$ is not divisible by $k$.

Mechanism C leads to discarding a high proportion of matches in the repechage group for the qualification tournaments presented in Table~\ref{Table5}, which may result in increased randomness. It means a problem in qualification for a tournament with huge financial issues at stake---for example, all teams were guaranteed at least USD 9.5 million each for their participation in the 2018 FIFA World Cup \citep{FIFA2017b}.

Consequently, we propose to apply mechanisms B or D if uniform group size cannot be achieved. Both are general, that is, they can be directly applied for arbitrary values of $k$ and $n$, and they coincide with mechanism A if $m=0$.

Finally, note that strategy-proofness cannot be guaranteed without some costs.
For example, Mechanism B requires fixing the team within a group as the one against which the results will be ignored in the repechage group, therefore some opponents (mainly the highest-ranked ones) may reduce their effort in those matches. However, no results are discarded in the group rankings, and this drawback more or less applies to the original incentive incompatible design, too. Thus, in our opinion, the potential reduction in efforts compared to the currently used regime is a small price for eliminating the opportunity that a team might be strictly better off by losing.

\section{Conclusions} \label{Sec6}

Tournament organisers may face unpleasant situations when they miss considering strategy-proofness as such a tanking of a team can easily undermine the integrity of the sport.
We have demonstrated that qualification tournaments for some recent FIFA World Cups and UEFA European Championships were incentive incompatible, and suggested some alternative incentive compatible designs. One of the proposals also yields the important lesson that violating anonymity is not necessarily harmful in mechanism design.

There are at least three possible directions for future research.
First, several distinct tournament formats can be studied from the perspective of strategy-proofness.
Second, the current theory-oriented investigation can be supplemented by estimating the probability of manipulation with the use of historical and Monte-Carlo simulated data. 
Finally, the solutions guaranteeing incentive compatibility can be compared with respect to other criteria.

Hopefully, this paper reinforces our view that the scientific community and the sports industry should work more closely together in analysing the effects of potential rules and, especially, rule changes, even before they are implemented. For example, the governing bodies of major sports may invite academics to identify possible loopholes in proposed regulations to prevent scandals in the future.


\section*{Acknowledgements}
\addcontentsline{toc}{section}{Acknowledgements}
\noindent
We are grateful to \emph{Dmitry Dagaev}, \emph{Tam\'as Halm} and \emph{L\'aszl\'o \'A. K\'oczy} for useful advice. \\
Nine anonymous reviewers provided valuable comments and suggestions on earlier drafts. \\
We are indebted to the \href{https://en.wikipedia.org/wiki/Wikipedia_community}{Wikipedia community} for contributing to our research by collecting and structuring information on the sports tournaments discussed. \\
The research was supported by the MTA Premium Postdoctoral Research Program grant PPD2019-9/2019.

\bibliographystyle{apalike}
\bibliography{All_references}

\end{document}